\documentclass[11pt,letterpaper]{article}

\usepackage{amsmath,amssymb,amsthm,mathtools}
\usepackage{setspace,footmisc,indentfirst}

\usepackage{microtype}
\usepackage{booktabs}
\usepackage{array}
\usepackage{enumitem}
\usepackage[authoryear,round]{natbib}
\usepackage[hypertexnames=false]{hyperref}
\usepackage{xcolor}
\usepackage{graphicx}
\usepackage{float}
\usepackage{tikz}
\usetikzlibrary{arrows.meta,positioning}

\hypersetup{
  colorlinks=true,
  linkcolor=blue,
  citecolor=blue,
  urlcolor=blue,
  pdftitle={E-Backtesting Expected Shortfall without Value-at-Risk Forecasts under Time-series Models},
  pdfauthor={Ruodu Wang; Jingcheng Yu},
  pdfsubject={Sequential backtesting of Expected Shortfall forecasts without Value-at-Risk forecasts},
  pdfkeywords={anytime-valid inference, conditional quantile, e-process, nuisance parameter, sequential testing}
}

\newtheorem{proposition}{Proposition}

\theoremstyle{definition}
\newtheorem{remark}{Remark}
\newtheorem{example}{Example}
\theoremstyle{plain}
\newtheorem{definition}{Definition}

\newcommand{\E}{\mathbb E}
\newcommand{\Pp}{\mathbb P}
\newcommand{\VaR}{\operatorname{VaR}}
\newcommand{\ES}{\operatorname{ES}}

\newcommand{\1}{\mathbf 1}
\newcommand{\papertablestyle}{%
  \linespread{1}\normalsize\selectfont
  \setlength{\tabcolsep}{4pt}%
  \renewcommand{\arraystretch}{1.12}}
\newcommand{\tablemethod}[2]{\(#1#2\)}
\newcommand{\ProofInAppendix}[1]{%
  \par\nobreak\noindent We give the proof in
  \hyperref[#1]{Appendix~\ref*{app:proofs}}.\par}

\title{E-Backtesting Expected Shortfall without Value-at-Risk Forecasts under Time-series Models}
\author{%
Ruodu Wang\thanks{Department of Statistics and Actuarial Science,
University of Waterloo, Waterloo, Ontario, Canada.
Corresponding author: \texttt{wang@uwaterloo.ca}.}
\and
Jingcheng Yu\thanks{Department of Statistics and Actuarial Science,
University of Waterloo, Waterloo, Ontario, Canada.
Email: \texttt{j563yu@uwaterloo.ca}.}}
\date{October 12, 2026}

\begin{document}
\maketitle

\begin{abstract}
Expected Shortfall (ES) measures the average size of tail losses. Its
sequential backtests typically also require a Value-at-Risk (VaR) forecast.
We develop two e-backtesting methods using only observed losses and ES
forecasts: one for iid losses only and the other for time series under an
AR-GARCH conditional VaR structure.
Under the stated assumptions, both control false rejections at any stopping
time. For the time-series method, validity requires a fixed
parameter--quantile pair representing the conditional VaR path and a positive
report margin. The time-series statistic
takes an infimum over possible parameter and VaR values. We choose the betting
process \((\lambda_t)\) by maximizing an empirical growth rate averaged
across candidate parameters. We compute the statistic by continuous VaR
profiling on a dense parameter grid. For grid approximations, we prove
convergence to the continuous infimum, together with convergence of
the probability of false rejection, as the grid is refined over a fixed
monitoring period. Lower and upper bounds for the VaR infimum on the
parameter grid control the accuracy of the numerical profiling.
Simulations, including ES
reports that change from over-reporting to under-reporting, and an application
to daily NASDAQ losses provide numerical comparisons with VaR--ES e-processes
and fixed-sample ES backtests, assessing detection frequencies and the timing
of rejections.

\medskip
\noindent\textbf{Keywords:} Anytime-valid inference; conditional quantile;
e-process; nuisance parameter; sequential testing.

\smallskip
\noindent\textbf{JEL classification:} C12; C22; C52; G17; G32.
\end{abstract}

\section{Introduction}

Expected Shortfall (ES) is the principal measure of tail risk in the internal-model
component of the Basel market-risk framework \citep{BASEL19}.
It measures the average size of tail losses.  Backtesting
ES forecasts therefore requires assessing whether the reported amounts
adequately reflect the size of potential tail losses.  When losses arrive
sequentially, a regulator may want to act as soon as evidence of
under-reporting emerges.  This calls for a procedure that controls false
rejections at any stopping time.

VaR plays a central role in many existing methods for evaluating ES forecasts.
ES alone is not elicitable \citep{gneiting-2011}, whereas VaR and ES are
jointly elicitable \citep{fissler-ziegel-2016}:  they admit scoring
functions whose expected value is uniquely minimized by the true forecast
pair.  For sequential ES backtesting, the ES e-statistic
of \citet{wang-wang-ziegel-2026} measures the loss above a reported VaR
relative to the reported ES--VaR gap.
Its null requires the reported VaR to equal the true conditional VaR, which is a non-trivial   assumption.
This leads to  a practically important question: how can ES forecasts be backtested sequentially when VaR
forecasts are unavailable?

To get rid of the reliance on the VaR forecasts, some price needs to be paid. 
This boils down to, essentially, a way to calibrate some information on the VaR form data, which reflects the size of the tail losses. 
For this, we consider
two settings: iid losses as a benchmark, and an AR-GARCH setting for
practical implementation, because AR-GARCH is one of the most popular models for financial data and it is simple enough to interpret; see e.g., \citet[Chapter 4]{MFE15}. 
For iid losses, calibration data give a confidence set for VaR.
For time series, we assume a fixed but unknown conditional VaR for the standardized residuals (innovations). This assumption is weaker than the assumption of iid innovations commonly used for time-series models. 
The key idea is very simple: we tackle the unreported VaR by taking the infimum of cumulative ES betting products over feasible values of the AR-GARCH parameters and the residual VaR.

There are 
existing ES backtests, which differ in their assumptions on the loss distribution
and temporal dependence. 
\citet{mcneil-frey-2000} leave the innovation distribution unspecified in
their bootstrap backtest, but rely on a location--scale representation with
iid innovations and correctly specified conditional VaR and volatility
forecasts to test the ES forecasts.  \citet{du-escanciano-2017} specify a parametric
conditional distribution and assume stationarity and ergodicity; their
statistics use this distribution to measure tail violations. 
Without specifying a full forecasting model, \citet{moldenhauer-pitera-2019}
use only ES forecasts and realized losses. They study the ES forecast minus
the realized loss, assume iid behaviour in the relevant tail of these
differences, and assess their rejection thresholds under selected
distributions. The strict and intercept ES regression (ESR) backtests of
\citet{bayer-dimitriadis-2022} jointly estimate quantile and ES regressions
using only ES forecasts and realized losses. Both ESR backtests allow dependence that
decays with time separation (strong mixing), but their asymptotic theory
requires finite higher moments and smooth conditional densities.
Misspecification of the quantile regression can also shift the null values
of the ES regression coefficients.
 
 The ES-only backtests
discussed above evaluate a fixed sample and do not provide an anytime-valid e-process for monitoring
conditional ES forecasts.
\citet[Proposition~3]{wang-wang-ziegel-2026} show that, over all distributions
with finite first moment, no nonnegative statistic of a single loss and its
ES forecast can have expectation at most one under every null distribution
and greater than one under every under-reporting alternative.
Simply counting losses above the ES forecast also misses
under-reporting caused by larger tail losses with unchanged exceedance
probability.

Several related studies address sequential inference and ES backtesting.
\citet{casgrain-larsson-ziegel-2024} construct test supermartingales for
elicitable and identifiable functionals, including the pair of VaR and ES. Their Appendix A has sketched 
the idea of taking infimum over unrestricted parameters, a principle that our methods also build on. 
\citet{howard-ramdas-mcauliffe-sekhon-2021} develop confidence sequences
with simultaneous coverage over time, and \citet{howard-ramdas-2022}
provide distribution-free quantile confidence sequences for iid
observations. \citet{agrawal-koolen-juneja-2021} obtain anytime-valid CVaR
confidence bounds under a known higher-moment bound in a best-arm
identification problem. Among other ES backtests, the auxiliary ESR test
of \citet{bayer-dimitriadis-2022} requires both VaR and ES forecasts.
\citet{hoga-demetrescu-2023} study sequential monitoring of VaR and ES
forecasts, while \citet{hue-hurlin-lu-2024} use probability integral
transforms to assess the duration and severity of VaR violations when
backtesting ES.
Table~\ref{tab:related-literature} compares the forecasts used, model
assumptions, and validity of the closest e-backtesting methods and ES
backtests. The main null of \citet{wang-wang-ziegel-2026} includes a correct
conditional VaR forecast, without a parametric loss model or restrictions on
temporal dependence. Our time-series construction uses an AR-GARCH structure
for the unreported conditional VaR path. This structure must represent the
true conditional VaR throughout monitoring for some fixed feasible pair
\((\theta_0,z_0)\); it does not specify the innovation distribution or require
iid residuals.

\begin{table}[!htbp]
\centering
\caption{E-backtesting and related ES backtests.}
\label{tab:related-literature}
\papertablestyle
\begin{tabular}{@{}>{\raggedright\arraybackslash}p{0.13\linewidth}
>{\raggedright\arraybackslash}p{0.21\linewidth}
>{\raggedright\arraybackslash}p{0.39\linewidth}
>{\raggedright\arraybackslash}p{\dimexpr0.27\linewidth-24pt\relax}@{}}
\toprule
Method & Forecasts used & Model assumptions & Validity \\
\midrule
WWZ26
& ES and VaR
& No parametric or dependence restrictions; correct conditional VaR
& Anytime valid \\
CLZ24
& Fixed values of VaR and ES
& Fixed conditional VaR and ES; bounded identification function or sub-\(\psi\) bound
& Anytime valid \\
\midrule
BD22
& ES only (strict and intercept ESR)
& Strong mixing and VaR--ES regression regularity
& Asymptotic, fixed sample \\
\midrule
This paper: iid
& ES only, constant over time
& iid losses; separate calibration sample
& Anytime valid \\
This paper: time series
& ES only, allowed to vary over time
& AR-GARCH conditional VaR structure; report margin
& Anytime valid \\
\bottomrule
\end{tabular}

\vspace{2pt}
\begin{minipage}{\linewidth}
\raggedright
\emph{Notes:} WWZ26, CLZ24, and BD22 denote \citet{wang-wang-ziegel-2026},
\citet{casgrain-larsson-ziegel-2024}, and \citet{bayer-dimitriadis-2022}.
\end{minipage}
\end{table}

The tester observes losses \(L_t\) and ES forecasts \(r_t\) at level
\(p\in(0,1)\).  Each report is measurable with respect to the information
\(\mathcal F_{t-1}\) available before observing \(L_t\).  We construct
e-processes for the null
\(\ES_p(L_t\mid\mathcal F_{t-1})\le r_t\) throughout the monitoring period.
For a threshold \(c>1\), a \emph{detection} means that the monitored statistic
reaches or exceeds \(c\) at least once during monitoring. For an e-process,
choosing \(c=1/\alpha\) gives a level-\(\alpha\) test at any stopping time.

We develop two ES-only e-backtests: one for iid losses with a constant ES
forecast, and one for time series under an AR-GARCH conditional VaR structure.
Both yield e-processes under the stated null hypotheses and model conditions.
We separate the costs of parameter and VaR profiling using the same loss path
and betting process, and bound the loss in log wealth from VaR profiling.
We also study the loss of maximal expected log growth and give conditions for
asymptotic power one at the true parameter, with VaR still profiled.
We establish convergence of grid approximations over fixed finite monitoring
periods. Simulations and a NASDAQ application compare the implemented
procedures with VaR--ES e-processes and fixed-sample ES backtests.

The rest of the paper is organized as follows.
Section~\ref{sec:iid-esonly} constructs an ES-only e-process for iid losses.
Section~\ref{sec:argarch-profile} extends the construction to time-series
losses and quantifies the cost of an unreported VaR.
Section~\ref{sec:betting-selection} explains how we choose the betting
process \((\lambda_t)\), and Section~\ref{sec:profile-computation} describes
the dense-grid approximation and proves its convergence.
Sections~\ref{sec:sim} and~\ref{sec:financial-data} present simulations and
the NASDAQ application, respectively. Section~\ref{sec:conclusion} concludes.
Appendix~\ref{app:grid-sensitivity} examines sensitivity to grid refinement,
Appendix~\ref{app:proofs} contains the proofs, and
Appendix~\ref{app:gamed} examines ES forecasts that change from
over-reporting to under-reporting.

\section{From the ES e-statistic to an iid ES-only e-process}
\label{sec:iid-esonly}

We first recall the standard ES e-statistic, which uses both the reported ES
and VaR values.  We then turn it into an ES-only e-process for iid losses by using observations from a calibration period
to restrict the unreported VaR and profiling over the resulting confidence
set.

\subsection{The ES e-statistic with an observed VaR}
\label{sec:es-estatistic}

Fix a probability level \(p\in(0,1)\). For a distribution \(F\), let
\(\VaR_p(F)=\inf\{x:F(x)\ge p\}\) and let \(\ES_p(F)\) denote
the Expected Shortfall at level \(p\).
Writing \(z_0=\VaR_p(F)\), we have \citep{rockafellar-uryasev-2002}
\begin{equation}
\label{eq:ru-lower-quantile}
\ES_p(F)
=
z_0+\frac{1}{1-p}\int(x-z_0)_+\,\mathrm dF(x).
\end{equation}
We allow the value \(\ES_p(F)=\infty\); it is finite if and only if \(\int x_+\,\mathrm dF(x)<\infty\).
VaR and ES are closely related in many ways; for 
the conversion between them in common distribution models, see \cite{LW23}.

For a hypothesis \(\mathcal P\), viewed as a set of probability measures, a
nonnegative random variable \(E\) is an e-variable for \(\mathcal P\) if
\(\sup_{P\in\mathcal P}\E_P[E]\le1\).  Its observed value is an e-value
\citep{vovk-wang-2021}.  An adapted nonnegative process \((M_t)\) is an
e-process for \(\mathcal P\) if
\(\sup_{P\in\mathcal P}\E_P[M_\tau]\le1\) for every stopping time \(\tau\) in
the monitoring period; the underlying filtration is assumed to be that of the data process.  By Markov's inequality at the first time the process
reaches \(1/\alpha\),
\(\sup_{P\in\mathcal P}\Pp_P(\max_t M_t\ge1/\alpha)\le\alpha\) on
the same period. We use the ES e-statistic of \citet{wang-wang-ziegel-2026} defined by
\[
e_p^{\mathrm{ES}}(x,r,z)
=
\frac{(x-z)_+}{(1-p)(r-z)},
\qquad z\le r.
\]
The numerator is the realized excess loss above \(z\), and the denominator
is the expected excess loss implied by the reported ES--VaR pair \((r,z)\).
At the boundary \(z=r\), following the convention of
\citet{wang-wang-ziegel-2026}, we set \(0/0=1\) and
\(a/0=\infty\) for \(a>0\).
If
\(H_0:\ES_p(F)\le r\) holds, then \(z_0\le r\).  If \(z_0<r\),
\eqref{eq:ru-lower-quantile} gives \(\int(x-z_0)_+\,\mathrm dF(x)\le (1-p)(r-z_0)\).  If \(z_0=r\), the inequalities
\(z_0\le\ES_p(F)\le r\) imply \(\ES_p(F)=r\), and
\eqref{eq:ru-lower-quantile} implies \(F(r)=1\), so the boundary e-statistic equals one for \(F\)-almost every \(x\).  Thus, in both cases, \(\int e_p^{\mathrm{ES}}(x,r,z_0)\,\mathrm dF(x)\le1\), and for any \(u\in[0,1-p]\), \(\int\{1-u+u\,e_p^{\mathrm{ES}}(x,r,z_0)\}\,\mathrm dF(x)\le1\).
For iid observations, products of these factors
are e-values. As each new loss is observed, we update the cumulative statistic
as follows:
\[
M_t(z;\lambda)
=\prod_{s=1}^t\{1-\lambda_s+\lambda_s e_p^{\mathrm{ES}}(L_s,r_s,z_s)\},
\]
where \(\lambda=(\lambda_t)\) is the predictable betting process and
\(z=(z_s)\) is the sequence of reported VaR forecasts, which this subsection
takes as observed.
If each \(z_s\) is the true conditional VaR and each \(r_s\) satisfies the null, the
conditional mean of every factor is at most one, and \(M_t(z;\lambda)\) is a
nonnegative supermartingale starting at one, and hence an e-process.  Note that at a
\(z\) that is not the true VaR, no such statement holds.
Throughout our ES-only constructions, we use
\(\lambda_t\in[0,1-p]\). This keeps every betting factor at least
\(p>0\), so the capital stays positive and its logarithm, which we use to
measure growth, is well defined.

Simply inserting an estimated VaR into the ES e-statistic does not in
general give an e-process for the ES null alone. To see this, suppose
\(F\) is continuous and strictly increasing and \(r=\ES_p(F)<\infty\).
Let \(\widehat z<r\) be a VaR estimate from an independent calibration
sample and let \(L\sim F\) be the first monitored loss. By the ES
minimization formula of \citet{rockafellar-uryasev-2002},
\[
\E_F[e_p^{\mathrm{ES}}(L,r,\widehat z)\mid\widehat z]
=1+\frac{\widehat z+\dfrac{1}{1-p}\int(x-\widehat z)_+\,\mathrm dF(x)-r}
{r-\widehat z}
\ge1,
\]
with strict inequality whenever \(\widehat z\ne z_0\). If this occurs
with positive probability, the first capital factor has expectation
greater than one for any fixed \(\lambda_1\in(0,1-p]\). Thus the product
is not an e-process, even though the ES report is correct and the VaR
estimate uses only past data.

\begin{remark}
The ES null implies
\(\Pp(L_t>r_t\mid\mathcal F_{t-1})\le1-p\), because ES is at least
as large as VaR at the same level. Define the ES-only exceedance process by
\begin{equation}
\label{eq:es-exceedance-process}
\prod_{s=1}^t\left\{1-\lambda_s+\lambda_s\frac{\1_{\{L_s>r_s\}}}{1-p}\right\},
\qquad t\ge0.
\end{equation}
Under the ES null, this process is a nonnegative supermartingale for every
predictable \(\lambda_t\in[0,1-p]\). The current loss enters each factor
only through the indicator \(\1_{\{L_t>r_t\}}\). Larger losses above the
forecast can therefore increase ES without increasing the exceedance
probability. Under-reporting alternatives satisfying
\(\Pp(L_t>r_t\mid\mathcal F_{t-1})\le1-p\) throughout monitoring also make
this process a nonnegative supermartingale. Its probability of ever reaching
\(c>1\) is then at most \(1/c\), limiting its power against these alternatives.
\end{remark}

\subsection{An ES-only e-process for iid losses}
\label{sec:confidence-profile}

We now consider the iid case, which is not practical for financial applications but is helpful for us to build intuition for the general methodology. 

Under iid losses, the unknown VaR is a single constant.  We use the first
\(n_c\) observations only to form a one-sided confidence set for this value.
Once the confidence set is fixed, the monitoring process takes the infimum over the VaR values in that
set.

We use \(t=1,\ldots,n_c\)
for the calibration period and \(t=n_c+1,\ldots,T\) for the monitoring period,
which has \(T-n_c\) observations.
Let \(L_{(1)}\le \cdots \le L_{(n_c)}\) be the order statistics of
\(L_1,\ldots,L_{n_c}\).  For a calibration error level \(\delta\), choose \(j_\delta\) by the formula
below. This index determines the VaR lower confidence bound
\(z_\delta=L_{(j_\delta)}\), with \(L_{(0)}=-\infty\).
\[
j_\delta
=
\max\left\{j\in\{0,\ldots,n_c\}:\Pp\{\operatorname{Bin}(n_c,p)<j\}\le\delta\right\}.
\]
This index is approximately \(n_c p\), shifted downward until the binomial
tail probability is at most \(\delta\).  For every loss distribution \(F\),
\(\Pp_F(z_\delta>z_0) \le \Pp\{\operatorname{Bin}(n_c,p)<j_\delta\} \le \delta\).

Let
\(\mathcal F_t=\sigma(L_1,\ldots,L_t)\), and for \(t>n_c\) let
\(\lambda_t\in[0,1-p]\) be any \(\mathcal F_{t-1}\)-measurable choice.  Fix \(\Gamma\ge1\).  For \(t\ge n_c\) and \(z\le r\), let
\begin{equation}
M_t(z;\lambda)
=
\prod_{s=n_c+1}^t
\left\{
1-\lambda_s+\lambda_s e_p^{\mathrm{ES}}(L_s,r,z)
\right\},
\end{equation}
where a factor with \(\lambda_s=0\) is defined to be one, including when
the e-statistic is infinite.  In the infimum below, the boundary value \(z=r\) is evaluated explicitly, in addition to
the interior values in \([z_\delta,r)\).
On the event \(\{z_\delta\le r\}\), define
\begin{equation}
\label{eq:calibrated-process}
M_t^{\mathrm{cal}}(\lambda)
=
\frac{\min\{\inf_{z_\delta\le z\le r}M_t(z;\lambda),\,\Gamma\}}{1+\Gamma\delta},
\qquad t\ge n_c.
\end{equation}
On the complementary event \(\{z_\delta>r\}\), the lower confidence
bound for VaR from observations in the calibration period exceeds the reported ES.
Since \(\VaR_p(F)\le\ES_p(F)\), this is already evidence against the
ES null.  We therefore set \(M_t^{\mathrm{cal}}(\lambda)=\Gamma/(1+\Gamma\delta)\), the largest
value allowed by the capped construction, for all \(n_c\le t\le T\).  Under the
null in Proposition~\ref{thm:iid-construction} below, \(z_0\le r\), so this
event is contained in the coverage failure \(\{z_\delta>z_0\}\) and has
probability at most \(\delta\).  It did not occur in the reported
simulations or in the empirical analysis.

We call \(M_t^{\mathrm{cal}}(\lambda)\) the confidence-bound e-process.  In the simulations, the betting process \((\lambda_t)\) is chosen using
observations from the calibration period and kept constant during monitoring. The validity result
allows any predictable betting process \((\lambda_t)\).
Truncation at \(\Gamma\) controls the contribution from coverage failure
\(\{z_\delta>z_0\}\).  On the complement of this event,
\(z_0\in[z_\delta,r]\), so the infimum in \eqref{eq:calibrated-process} is at most \(M_t(z_0;\lambda)\).  On the coverage-failure event, this domination need not hold, but the
numerator in \eqref{eq:calibrated-process} is at most \(\Gamma\) in either
case of the construction.  The normalizing factor
\(1+\Gamma\delta\) accounts for this term in the expectation bound.

\begin{remark}
For \(M_t^{\mathrm{cal}}(\lambda)\) we set \(\delta=0.001\) and \(\Gamma=40\).
We divide the capped statistic by \(1+\Gamma\delta=1.04\), giving a maximum
reported e-value of \(\Gamma/(1+\Gamma\delta)\approx38.46\), above the
largest threshold \(c=10\). Decreasing \(\delta\) widens the
confidence interval, and increasing \(\Gamma\) increases the divisor
\(1+\Gamma\delta\).
\end{remark}

Suppose that \(L_1,L_2,\ldots\) are iid with distribution \(F\) and
\(\ES_p(F)\le r\). Fix a finite terminal time \(T\ge n_c\).

\begin{proposition}
\label{thm:iid-construction}
The process \(M_t^{\mathrm{cal}}(\lambda)\) in \eqref{eq:calibrated-process} is an e-process under \(F\);
that is, for every
stopping time \(n_c\le\tau\le T\),
\(\E_F[M_\tau^{\mathrm{cal}}(\lambda)]\le1\).
Hence, for every \(c>1\),
\[
\Pp_F
\left(
\sup_{n_c\le t\le T}M_t^{\mathrm{cal}}(\lambda)\ge c
\right)
\le
\frac1c.
\]
\end{proposition}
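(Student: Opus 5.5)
The plan is to split the expectation according to the coverage event \(A=\{z_\delta\le z_0\}\) and its complement, and to use the cap \(\Gamma\) on \(A^c\).

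Under the null we have \(z_0\le \ES_p(F)\le r\). On \(A\) we therefore get \(z_\delta\le z_0\le r\), so we are in the case \(\{z_\delta\le r\}\) of the construction and \(z_0\) lies in the profiling range \([z_\delta,r]\). Hence
\[
M_\tau^{\mathrm{cal}}(\lambda)\le \frac{M_\tau(z_0;\lambda)}{1+\Gamma\delta}\quad\text{on }A .
\]
In every case (including \(\{z_\delta>r\}\subseteq A^c\)) we also have \(M_\tau^{\mathrm{cal}}(\lambda)\le \Gamma/(1+\Gamma\delta)\). This gives
\[
\E_F[M_\tau^{\mathrm{cal}}(\lambda)]\le \frac{\E_F[M_\tau(z_0;\lambda)]+\Gamma\,\Pp_F(A^c)}{1+\Gamma\delta}.
\]
The paper already notes that \(\Pp_F(A^c)=\Pp_F(z_\delta>z_0)\le\delta\). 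It therefore suffices to show \(\E_F[M_\tau(z_0;\lambda)]\le1\). Here we drop the indicator of \(A\), which is legitimate because \(M\ge0\).

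For this, consider \(M_t(z_0;\lambda)\) for \(n_c\le t\le T\) with respect to \(\mathcal F_t=\sigma(L_1,\dots,L_t)\). Note that \(z_0\) is a deterministic constant, unlike \(z_\delta\); this is exactly why the infimum is dominated by a fixed-\(z\) process. Since the losses are iid, \(L_t\) is independent of \(\mathcal F_{t-1}\) and has distribution \(F\), while \(\lambda_t\in[0,1-p]\) is \(\mathcal F_{t-1}\)-measurable. From the computation in Section~\ref{sec:es-estatistic}, \(\int e_p^{\mathrm{ES}}(x,r,z_0)\,\mathrm dF(x)\le 1\), covering both the interior case \(z_0<r\) and the boundary case \(z_0=r\), where \(F(r)=1\) makes the factor equal to one almost surely. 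Hence
\[
\E_F\bigl[1-\lambda_t+\lambda_t e_p^{\mathrm{ES}}(L_t,r,z_0)\mid\mathcal F_{t-1}\bigr]\le 1 .
\]
If \(\lambda_t=0\), the factor is one by convention. The factors are nonnegative, so \((M_t(z_0;\lambda))_{n_c\le t\le T}\) is a nonnegative supermartingale starting at \(M_{n_c}=1\). Because \(\tau\) is bounded by \(T\), optional stopping gives \(\E_F[M_\tau(z_0;\lambda)]\le1\). One caveat: factors can be \(+\infty\) only when \(z=r\) with \(L_s>r\), and at \(z_0=r\) this happens with probability zero, so the conditional expectations are well defined.

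The main obstacle is the measurability and well-definedness of the infimum over the continuum \(z\in[z_\delta,r]\), and hence of \(M_t^{\mathrm{cal}}\) as an adapted process. I would handle this by noting that for each \(s\) the factor is continuous in \(z\) on \([z_\delta,r)\). The infimum over \([z_\delta,r)\) therefore equals the infimum over rationals together with the random endpoint \(z_\delta\), which is itself \(\mathcal F_{n_c}\)-measurable. Adding the separately evaluated boundary value \(z=r\) then gives an \(\mathcal F_t\)-measurable quantity. The argument above in fact only needs the pointwise domination \(\inf_z M_t(z;\lambda)\le M_t(z_0;\lambda)\) on \(A\) together with measurability, so this suffices.

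Finally, the tail bound follows from the standard argument. Let \(\tau=\min\{t\ge n_c: M_t^{\mathrm{cal}}(\lambda)\ge c\}\wedge T\), which is a stopping time bounded by \(T\). Markov's inequality then gives
\[
\Pp_F\bigl(\sup_{n_c\le t\le T} M_t^{\mathrm{cal}}(\lambda)\ge c\bigr)\le \Pp_F\bigl(M_\tau^{\mathrm{cal}}(\lambda)\ge c\bigr)\le \E_F[M_\tau^{\mathrm{cal}}(\lambda)]/c\le 1/c .
\]
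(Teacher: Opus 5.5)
Your proposal is correct and follows essentially the same route as the paper. You dominate the normalized capped infimum by the fixed-\(z_0\) supermartingale on the coverage event and by \(\Gamma\) on its complement, then use \(\Pp_F(z_\delta>z_0)\le\delta\), optional stopping at bounded \(\tau\), and Markov's inequality at the first hitting time. Your additional check that the infimum over \([z_\delta,r]\) is \(\mathcal F_t\)-measurable is sound, and the paper's proof of this proposition leaves it implicit; you obtain it from continuity on \([z_\delta,r)\), an infimum over rationals, and the separately evaluated endpoint \(z=r\).
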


\ProofInAppendix{proof:thm-iid-construction}

The proposition does not require \(\VaR_p(F)<\ES_p(F)\).  The inequality
\(\VaR_p(F)\le\ES_p(F)\le r\) ensures that the true VaR belongs to \([z_\delta,r]\) on every successful coverage event.  In the boundary case
\(\VaR_p(F)=\ES_p(F)=r\), the true-VaR e-statistic is identically one under
the null.
If the process is defined on an infinite sequence, the
corresponding statement follows from the same result applied to
\(\tau\wedge T\) and then letting \(T\to\infty\).

In the simulations, we choose \(\lambda\) using only the calibration
observations and keep \(\lambda_t=\lambda\) throughout monitoring, so the
betting process is predictable. For \(j_\delta\ge1\) and \(z_\delta<r\), let
\(X_s=e_p^{\mathrm{ES}}(L_s,r,z_\delta)\), \(s=1,\ldots,n_c\). Maximizing
the second-order Taylor approximation of the empirical log growth around
\(\lambda=0\) over \([0,1-p]\) gives
\[
\lambda
=
\min\left\{1-p,\max\left\{0,
\frac{\sum_{s=1}^{n_c}(X_s-1)}
{\sum_{s=1}^{n_c}(X_s-1)^2}
\right\}\right\}.
\]
The denominator is positive because \(X_s=0\) for at least \(j_\delta\)
calibration observations. We set \(\lambda=0\) if \(j_\delta=0\) or
\(z_\delta=r\); when \(z_\delta>r\), we use the capped value defined above.
For the time-series construction, we use the
betting process \((\lambda_t)\) specified in
Section~\ref{sec:betting-selection}.

This iid construction uses the calibration error \(\delta\) of the confidence set,
whereas the time-series construction below uses a different argument, with no calibration
error for \(\Theta\): the validity proof compares the infimum with the
product obtained from one null value.
The e-power comparison in Proposition~\ref{prop:profiling-epower} below
includes the infimum over a fixed VaR range for iid losses as a special
case of the time-series construction.

\section{Time-series losses}
\label{sec:argarch-profile}

For time-series losses, we compute standardized residuals for each
candidate parameter vector \(\theta\), using the location and scale
obtained from past losses. We apply the ES e-statistic to these residuals
and the ES forecasts expressed on the same residual scale.
The losses \(L_1,\ldots,L_{n_c}\) form the calibration period and provide the initial
history of these residuals and transformed ES forecasts used to choose \(\lambda_t\).
Monitoring starts with capital one. At each date \(t\), we first receive
the ES report \(r_t\), choose \(\lambda_t\) using this report and past losses,
and update the feasible parameter--VaR pairs. After observing \(L_t\), we
update their cumulative products and compare the minimum over the feasible
grid pairs with the detection threshold.
Throughout the remaining sections, \(z\) denotes a residual VaR;
the corresponding loss-scale VaR at parameter \(\theta\) is
\(\mu_t(\theta)+\sigma_t(\theta)z\).  At a fixed pair \((\theta_0,z_0)\) satisfying
\eqref{eq:common-conditional-var} below, this loss-scale VaR is the
conditional VaR.

\subsection{Residual standardization and the null hypothesis}
\label{sec:residual-quantile-null}

Let \((\mathcal F_t)\) be the filtration of information available at time
\(t\), and let \(\Theta\) be the set of candidate parameter values
\(\theta=(\phi,\alpha,\beta,v)\), over which the e-process below takes its
infimum.  For every parameter value \(\theta\in\Theta\), we use the formulas below to
compute \(\mu_t(\theta)\) and \(\sigma_t(\theta)\) from past losses. Both
quantities are known before \(L_t\) is observed.  They are the conditional mean
and standard deviation if the innovations have conditional mean zero
and variance one; validity requires no such moment assumptions.

For \(\theta\in\Theta\), define the standardized residual and standardized ES
forecast by
\[
Z_t(\theta)=\frac{L_t-\mu_t(\theta)}{\sigma_t(\theta)},
\qquad
\tilde r_t(\theta)=\frac{r_t-\mu_t(\theta)}{\sigma_t(\theta)}.
\]
Before monitoring, we fix a report margin \(\varepsilon>0\) in these
standardized units and constants
\(\phi_{\max},\alpha_{\max},\beta_{\max}\), \(\eta_\Theta>0\), and
\(V\ge1\), and set
\[
\Theta
=
\{(\phi,\alpha,\beta,v):
|\phi|\le\phi_{\max},\ 
0\le\alpha\le\alpha_{\max},\ 
0\le\beta\le\beta_{\max},\ 
\alpha+\beta\le1-\eta_\Theta,\quad
0\le v\le V\}.
\]
The infimum ranges over the initial value \(v\) of the squared-scale recursion,
as well as the coefficients \(\phi,\alpha,\beta\). We normalize the intercept
to \(\omega=1-\alpha-\beta\), so \(\omega\ge\eta_\Theta\) on this compact set.
If
\(\omega\) were unrestricted, multiplying
\((\omega,\alpha,v)\) by a common positive constant \(a\)
multiplies every squared scale by \(a\).  Scaling \(z\) and the
report margin \(\varepsilon\) by \(a^{-1/2}\) then preserves the factors and
\(M_t(\lambda)\) over the correspondingly transformed \(\Theta\).

Let \(N_0\ge0\) be the number of additional historical losses used to run
the location and scale recursions before calibration begins. This history
reduces the influence of the initial variance \(v\) on the scales used to
form the calibration residuals and standardized ES forecasts, which enter
the choice of \(\lambda_t\). These \(N_0\) observations have indices
\(t=1-N_0,\ldots,0\); the following
\(n_c\) observations, \(t=1,\ldots,n_c\), form the calibration period,
and monitoring starts at \(t=n_c+1\). If \(N_0=0\), the recursions start
at the first calibration observation.

We also use one preceding loss \(L_{-N_0}\), not counted in \(N_0\), to
compute the first location term \(\mu_{1-N_0}(\theta)\). Set
\(u_{-N_0}(\theta)=0\) and \(\sigma_{-N_0}^2(\theta)=v\), and for
\(t\ge1-N_0\) define
\[
\mu_t(\theta)=\phi L_{t-1},\qquad
u_t(\theta)=L_t-\mu_t(\theta),\qquad
\sigma_t^2(\theta)
=
\omega+\alpha u_{t-1}(\theta)^2+\beta\sigma_{t-1}^2(\theta).
\]
Thus \(\sigma_{1-N_0}^2(\theta)=\omega+\beta v\). The recursions continue
through calibration and monitoring without a restart. The restriction
\(v\in[0,V]\) applies only to the initial squared scale, before the first
observation of the reference period comprising the historical and
calibration observations. Subsequent squared scales may exceed \(V\).
Coverage of an AR-GARCH
generating scale requires an initial value \(v_0\in[0,V]\) satisfying
\(\sigma_{1-N_0}^2(\theta_0)=\omega_0+\beta_0v_0\); the formulas above then give equality at subsequent dates as well.  Holding the observations fixed, the range
\([0,V]\) contributes at most \(\beta^{N_0+t}V\) to the squared scale at
time \(t\). Each additional historical observation reduces this bound by
a factor of \(\beta\). The effect can decay slowly when \(\beta\) is
close to one, so we retain the full range \(v\in[0,V]\) in the calculation.

\begin{remark}
We use \(\varepsilon=0.01\) throughout. For the reference calculation, the
parameter set is
\begin{equation}
\label{eq:simulation-parameter-set}
\begin{aligned}
\Theta=\{(\phi,\alpha,\beta,v):\;& |\phi|\le0.90,\quad
0\le\alpha\le0.80,\\
&0\le\beta,\quad \alpha+\beta\le0.9995,\quad 0\le v\le50\},
\end{aligned}
\end{equation}
so \(V=50\) and \(\omega=1-\alpha-\beta\ge0.0005\). This keeps \(\Theta\)
away from the boundary \(\alpha+\beta=1\), where the normalized intercept
vanishes. On this set, \(\mu_t(\theta)\) and \(\sigma_t(\theta)\) are
continuous in \(\theta\), with \(\sigma_t(\theta)>0\).
We use \(N_0=0\) in the simulations and \(N_0=700\) in the NASDAQ analysis.
The fixed parameter grid used to choose \((\lambda_t)\) is specified in
Section~\ref{sec:numerical-implementation}.
\end{remark}

For an unknown pair \((\theta_0,z_0)\in\Theta\times\mathbb R\), we assume
that the formulas above give well-defined values of \(\mu_t(\theta_0)\) and
\(\sigma_t(\theta_0)>0\) and, almost surely,
\begin{equation}
\label{eq:common-conditional-var}
\VaR_p\!\left(Z_t(\theta_0)\mid\mathcal F_{t-1}\right)=z_0,
\qquad n_c<t\le T.
\end{equation}
We call \(z_0\) the true residual VaR.  The residual distribution may change
with time and history while its conditional \(p\)-quantile remains fixed.  Condition \eqref{eq:common-conditional-var} is imposed only
during monitoring; earlier observations are used to begin computing \(\mu_t(\theta)\) and
\(\sigma_t(\theta)\) and to help determine the betting process
\((\lambda_t)\).

The conditional residual ES \(\ES_p(Z_t(\theta_0)\mid\mathcal F_{t-1})\) may
vary over time. Under the null, the finite report bounds it, giving a finite
conditional positive-tail expectation by \eqref{eq:ru-lower-quantile};
under alternatives, it may be infinite.
Conditional on \(\mathcal F_{t-1}\), the location \(\mu_t(\theta_0)\)
and positive scale \(\sigma_t(\theta_0)\) are fixed. The conditional VaR
and ES of \(L_t\) are therefore obtained by multiplying the residual VaR
\(z_0\) and the residual ES by \(\sigma_t(\theta_0)\) and adding
\(\mu_t(\theta_0)\):
\[
\begin{aligned}
\VaR_p(L_t\mid\mathcal F_{t-1})
&=\mu_t(\theta_0)+\sigma_t(\theta_0)z_0,\\
\ES_p(L_t\mid\mathcal F_{t-1})
&=\mu_t(\theta_0)+\sigma_t(\theta_0)\ES_p(Z_t(\theta_0)\mid\mathcal F_{t-1}).
\end{aligned}
\]
An AR-GARCH model whose iid innovations have distribution \(F\) is a special case,
with \(z_0=\VaR_p(F)\) and \(\ES_p(Z_t(\theta_0)\mid\mathcal F_{t-1})=\ES_p(F)\).

For each \(s>n_c\) and \(z<\tilde r_s(\theta)\), write the ES e-statistic
from Section~\ref{sec:es-estatistic} in residual units as
\begin{equation}
X_s(\theta,z)
=\frac{(Z_s(\theta)-z)_+}{(1-p)\{\tilde r_s(\theta)-z\}}.
\end{equation}
The standardized ES forecast \(\tilde r_s(\theta)\) and the value \(\lambda_s\) of
the betting process \((\lambda_t)\) are \(\mathcal F_{s-1}\)-measurable.  Each factor uses its own date's
report and remains fixed after \(L_s\) is observed.  For a fixed data path,
\(X_s\) is continuous in \((\theta,z)\) on its domain.

\begin{definition}
For \(n_c<t\le T\) and \(\theta\in\Theta\), define
\(\bar z_t(\theta) =\min_{n_c<s\le t}\tilde r_s(\theta)-\varepsilon\).
A pair \((\theta,z)\in\Theta\times\mathbb R\) is
\emph{feasible at time \(t\)} if \(z\le\bar z_t(\theta)\), or equivalently
if \(\tilde r_s(\theta)-z\ge\varepsilon\) for every \(n_c<s\le t\).
\end{definition}

The minimum restricts only the feasible residual VaR; it does not replace the
report in any factor.
Our target is an ES forecast that understates the true conditional ES at
one or more times in the monitoring period.  Under
\eqref{eq:common-conditional-var}, the finite-sample error
guarantee applies to the following null: there is a pair \((\theta_0,z_0)\)
satisfying, almost surely,
\[
H_0:\quad
\theta_0\in\Theta,\qquad
\tilde r_t(\theta_0)\ge \ES_p(Z_t(\theta_0)\mid\mathcal F_{t-1}),\qquad
z_0\le \tilde r_t(\theta_0)-\varepsilon,
\qquad n_c<t\le T.
\]

Under the null, \(\theta_0\) belongs to \(\Theta\), each ES forecast is at
least the true conditional ES, and each standardized ES forecast lies at least
\(\varepsilon\) above the true residual VaR.  The ES condition is equivalent to
\(r_t\ge\mu_t(\theta_0)+\sigma_t(\theta_0)\ES_p(Z_t(\theta_0)\mid\mathcal F_{t-1})=\ES_p(L_t\mid\mathcal F_{t-1})\).
The report margin condition is part of the statistic and its null, ensuring
\(z_0\le\bar z_t(\theta_0)\).  If the report is not below the conditional
ES, a sufficient condition follows from the conditional ES identity:
\[
\tilde r_s(\theta_0)-z_0
\ \ge\
\ES_p(Z_s(\theta_0)\mid\mathcal F_{s-1})-z_0
\ =\
\frac{\E[(Z_s(\theta_0)-z_0)_+\mid\mathcal F_{s-1}]}{1-p}.
\]
Thus \(\ES_p(Z_s(\theta_0)\mid\mathcal F_{s-1})-z_0\ge\varepsilon\) throughout the monitoring period suffices.
On this class, failure of the report margin condition implies under-reporting,
since \(\tilde r_s(\theta_0)<z_0+\varepsilon\le\ES_p(Z_s(\theta_0)\mid\mathcal F_{s-1})\).
Without this condition, even an exact ES forecast may fail the
report margin condition.  The e-process \(M_t(\lambda)\) defined below remains defined in either case
because its feasible residual VaR set contains all sufficiently negative
values.

For validity, \(\Theta\) may be any fixed nonempty set of parameter values
\(\theta=(\phi,\alpha,\beta,v)\) for which the formulas above define
\(\mu_t(\theta)\) and \(\sigma_t(\theta)\).  The ES forecast
sequence \(r_t\) and the betting process \((\lambda_t)\) are predictable,
with \(\lambda_t\in[0,1-p]\). For each fixed \(\theta\), the paths
\(\mu_t(\theta)\) and \(\sigma_t(\theta)\) are predictable; for each fixed
data path, they are continuous in \(\theta\), with \(\sigma_t(\theta)>0\).

\subsection{The e-process \texorpdfstring{\(M_t(\lambda)\)}{M\_t(lambda)}}

Let \(\lambda=(\lambda_t)\) be a predictable betting process.  Set
\(M_{n_c}(\lambda)=1\) and, for \(n_c<t\le T\), define
\begin{equation}
\label{eq:full-parameter-profile}
M_t(\lambda)
=
\inf_{\substack{\theta\in\Theta\\z\le\bar z_t(\theta)}}
\prod_{s=n_c+1}^t
\{1-\lambda_s+\lambda_s X_s(\theta,z)\}.
\end{equation}
The same pair \((\theta,z)\) is used in every factor; the product and the
constraint on \(z\) use the same observations \(n_c<s\le t\).
For a feasible pair, \(M_t(\theta,z;\lambda)\) denotes the product
inside the infimum, with \(M_{n_c}(\theta,z;\lambda)=1\).

\begin{proposition}
\label{thm:full-parameter-profile-validity}
If a fixed pair \((\theta_0,z_0)\) satisfies
\eqref{eq:common-conditional-var} and \(H_0\) holds,
then
\((M_t(\lambda))_{n_c\le t\le T}\) is an e-process: \(\E[M_\tau(\lambda)]\le1\)
for every
stopping time \(n_c\le\tau\le T\).
Hence, for every \(c>1\),
\[
\Pp\left(\sup_{n_c<t\le T}M_t(\lambda)\ge c\right)\le \frac1c.
\]
\end{proposition}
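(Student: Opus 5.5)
The plan is to dominate the infimum by the product at the single null pair and show that this product is a nonnegative supermartingale. Fix the pair \((\theta_0,z_0)\) from \(H_0\). The margin condition \(z_0\le\tilde r_t(\theta_0)-\varepsilon\) for all \(n_c<t\le T\) gives \(z_0\le\bar z_t(\theta_0)\) for every such \(t\). Hence \((\theta_0,z_0)\) is feasible at every time. Since the infimum in \eqref{eq:full-parameter-profile} runs over a set containing this pair, we get pathwise
\[
0\le M_t(\lambda)\le M_t(\theta_0,z_0;\lambda),\qquad n_c\le t\le T,
\]
with equality \(M_{n_c}=1\) at the start. No measurability of the infimum over the random feasible set is needed for the bound itself. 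For \(\E[M_\tau(\lambda)]\) to make sense, I would note that by continuity of \(X_s\) in \((\theta,z)\) the infimum can be taken over a countable dense subset of the feasible set, so \(M_t(\lambda)\) is \(\mathcal F_t\)-measurable. Alternatively, one can interpret the expectation as an outer expectation, and the bound still holds.

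Next I would show that \(N_t:=M_t(\theta_0,z_0;\lambda)\) is a nonnegative supermartingale. Each factor is \(1-\lambda_s+\lambda_sX_s(\theta_0,z_0)\ge1-\lambda_s\ge p>0\). Because \(\lambda_s\), \(\tilde r_s(\theta_0)\), \(\mu_s(\theta_0)\) and \(\sigma_s(\theta_0)\) are \(\mathcal F_{s-1}\)-measurable, it suffices to show \(\E[X_s(\theta_0,z_0)\mid\mathcal F_{s-1}]\le1\). Here \(\tilde r_s(\theta_0)-z_0\ge\varepsilon>0\), so there is no boundary case. Under \eqref{eq:common-conditional-var}, the conditional \(p\)-quantile of \(Z_s(\theta_0)\) is \(z_0\). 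Applying \eqref{eq:ru-lower-quantile} to a regular conditional distribution of \(Z_s(\theta_0)\) given \(\mathcal F_{s-1}\) then gives
\[
\E[(Z_s(\theta_0)-z_0)_+\mid\mathcal F_{s-1}]=(1-p)\{\ES_p(Z_s(\theta_0)\mid\mathcal F_{s-1})-z_0\}\le(1-p)\{\tilde r_s(\theta_0)-z_0\},
\]
where the inequality uses the ES part of \(H_0\). This is exactly the conditional version of the computation in Section~\ref{sec:es-estatistic}. Conditional linearity then yields \(\E[N_s\mid\mathcal F_{s-1}]\le N_{s-1}\).

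For a stopping time \(n_c\le\tau\le T\), which is bounded, optional stopping gives \(\E[N_\tau]\le N_{n_c}=1\). Therefore \(\E[M_\tau(\lambda)]\le\E[N_\tau]\le1\). For the probability bound, I would apply Ville's inequality to \(N\): \(\Pp(\sup_t M_t(\lambda)\ge c)\le\Pp(\sup_t N_t\ge c)\le 1/c\). Equivalently, one can use Markov's inequality at the stopping time \(\tau=\min\{t:M_t\ge c\}\wedge T\).

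The main obstacle is the conditional step. It requires a regular conditional distribution for the residual and the fact that the Rockafellar--Uryasev identity holds conditionally with the prescribed \(z_0\) as the conditional VaR. This must also cover the case where the conditional ES is finite only because the report bounds it, which is why \(H_0\) is needed. The measurability of the infimum, handled above via the countable dense subset, is a secondary technical point.
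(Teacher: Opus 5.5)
Your proposal is correct and follows the paper's proof essentially step for step. Feasibility of the null pair gives the pathwise domination \(0\le M_t(\lambda)\le M_t(\theta_0,z_0;\lambda)\); the conditional Rockafellar--Uryasev identity together with the margin \(\tilde r_s(\theta_0)-z_0\ge\varepsilon\) makes that product a nonnegative supermartingale; and optional stopping plus Ville's inequality finish the argument. The one detail to tighten is measurability: the feasible set depends on the path, so you need a \emph{fixed} countable index set, which the paper gets by writing \(z=\bar z_t(\theta)-a\) with \((\theta,a)\) in a countable dense subset of \(\Theta\times[0,\infty)\). Equivalently, you can take a fixed countable dense subset of \(\Theta\times\mathbb R\), restrict it to the open region \(z<\bar z_t(\theta)\), and use continuity.
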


\ProofInAppendix{proof:thm-full-parameter-profile-validity}

At a pair \((\theta_0,z_0)\) satisfying \(H_0\), predictability of the report
and the conditional ES identity give
\[
\E[X_s(\theta_0,z_0)\mid\mathcal F_{s-1}]
=\frac{\ES_p(Z_s(\theta_0)\mid\mathcal F_{s-1})-z_0}{\tilde r_s(\theta_0)-z_0}\le1.
\]
Thus \(M_t(\theta_0,z_0;\lambda)\) is a nonnegative supermartingale
starting at one.  This same pair remains feasible under \(H_0\), so
\(0\le M_t(\lambda)\le M_t(\theta_0,z_0;\lambda)\) for every \(n_c\le t\le T\).
This domination holds at stopping times and proves the e-process claim, even
when \(M_t(\lambda)\) itself is not a supermartingale.

\begin{example}
Let \(n_c=0\), \(T=2\), and \(\Theta=\{\theta_0\}\). Let \(Z_1,Z_2\)
be iid, taking values \(1,7,8\) with probabilities \(1/5,3/5,1/5\),
respectively. Use constant reports \(\tilde r=37/5\) with \(p=1/2\),
\(\varepsilon=1/100\), and \(\lambda_t=2/5\). The residual VaR is
\(z_0=7\) and ES is \(37/5\), so \(H_0\) holds. On \(\{Z_1=8\}\),
direct evaluation of the infimum gives
\(\E[M_2(\lambda)\mid Z_1=8]=\frac{301}{200}>\frac75=M_1(\lambda)\).
Thus \(M_t(\lambda)\) is not a supermartingale, although it is an
e-process by Proposition~\ref{thm:full-parameter-profile-validity}.
\end{example}

\begin{remark}
\label{rem:general-quantile-family}
The proof of Proposition~\ref{thm:full-parameter-profile-validity}
extends to a fixed family of finite predictable candidate conditional VaR
paths \(q_t(\eta)\), with the ES reports and betting process as above.
Take the infimum of the cumulative ES betting products over feasible
candidates satisfying \(q_s(\eta)<r_s\) for every \(n_c<s\le t\),
allowing further feasibility restrictions. Under the ES null, the same
argument gives an e-process whenever one fixed candidate represents the
true conditional VaR throughout monitoring and remains feasible, and the
infimum is \(\mathcal F_t\)-measurable.
\end{remark}

\subsection{The cost of an unreported VaR}

Fix the observed path and the betting process \((\lambda_t)\).  To separate the costs of taking the infimum over the residual VaR and over the
parameter \(\theta\) used to compute \(\mu_t(\theta)\) and
\(\sigma_t(\theta)\), define
\[
M_t(\theta;\lambda)
=\inf_{z\le\bar z_t(\theta)}M_t(\theta,z;\lambda),
\qquad n_c<t\le T,
\]
with \(M_{n_c}(\theta;\lambda)=1\).  Thus
\(M_t(\lambda)=\inf_{\theta\in\Theta}M_t(\theta;\lambda)\).
Under the feasibility and betting conditions below, the infimum definitions
give \(0<M_t(\lambda)\le M_t(\theta_0;\lambda)\le
M_t(\theta_0,z_0;\lambda)\), with threshold-crossing events nested in the
same order, even when the reports understate the conditional ES.

\begin{proposition}
\label{prop:profiling-cost-decomposition}
Let \(\theta_0\in\Theta\), let \(\lambda\) be predictable with values in
\([0,1-p]\), and suppose that
\(z_0\le\tilde r_t(\theta_0)-\varepsilon\) for every \(n_c<t\le T\).
Then, for \(n_c\le t\le T\), the cost of profiling
the residual VaR arises only from the exceedances of the true residual VaR:
\[
\log\frac{M_t(\theta_0,z_0;\lambda)}{M_t(\theta_0;\lambda)}
\le
\sum_{s=n_c+1}^{t}
\1\{Z_s(\theta_0)>z_0\}
\log\left(
1+\frac{\lambda_sX_s(\theta_0,z_0)}{1-\lambda_s}
\right).
\]
\end{proposition}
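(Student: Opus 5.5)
We bound the ratio factor by factor, after reducing the infimum over \(z\) to a lower bound on each factor. Fix \(\theta=\theta_0\) and write \(Z_s=Z_s(\theta_0)\), \(\tilde r_s=\tilde r_s(\theta_0)\), and \(X_s(z)=X_s(\theta_0,z)\). By hypothesis, \(z_0\le\bar z_t(\theta_0)\), so \(z_0\) is feasible and \(M_t(\theta_0;\lambda)\le M_t(\theta_0,z_0;\lambda)\). Every factor is at least \(1-\lambda_s\ge p>0\), so both quantities are positive and the logarithm of the ratio is well defined.

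The key step is to bound the infimum from below. We claim that for every feasible \(z\le\bar z_t(\theta_0)\) and every \(n_c<s\le t\),
\[
1-\lambda_s+\lambda_sX_s(z)\ \ge\ \begin{cases}1-\lambda_s, & Z_s>z_0,\\ 1-\lambda_s+\lambda_sX_s(z_0)=1, & Z_s\le z_0 \text{ and } z\ge z_0, \end{cases}
\]
with the appropriate bound also when \(z<z_0\). The naive bound \(1-\lambda_s\) on every factor gives the stated right-hand side exactly on the exceedance dates. Indeed, on \(\{Z_s>z_0\}\) we have
\[
\log\frac{1-\lambda_s+\lambda_sX_s(z_0)}{1-\lambda_s}=\log\Bigl(1+\frac{\lambda_sX_s(z_0)}{1-\lambda_s}\Bigr).
\]
On non-exceedance dates, \(X_s(z_0)=0\), so the factor at \(z_0\) equals \(1-\lambda_s\). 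It therefore suffices to show that, for every feasible \(z\), the product over non-exceedance dates of \(1-\lambda_s+\lambda_sX_s(z)\) is at least \(\prod(1-\lambda_s)\). This holds trivially, since \(X_s(z)\ge0\).

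Combining the two cases, for any feasible \(z\),
\[
M_t(\theta_0,z;\lambda)\ \ge\ \prod_{s}(1-\lambda_s)\ =\ M_t(\theta_0,z_0;\lambda)\Big/\prod_{s:\,Z_s>z_0}\Bigl(1+\frac{\lambda_sX_s(z_0)}{1-\lambda_s}\Bigr).
\]
The equality holds because the \(z_0\) product equals \(\prod_s(1-\lambda_s)\) times the exceedance terms. Taking the infimum over feasible \(z\) and then logarithms yields the inequality. This argument does not need the monotonicity or shape of \(X_s\) in \(z\), only nonnegativity. The main care point is the convention \(\lambda_s=1\), which is excluded because \(\lambda_s\le1-p<1\); the convention \(\lambda_s=0\) makes the term zero. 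Also, \(X_s(\theta_0,z)\) is finite for feasible \(z\) because \(\tilde r_s-z\ge\varepsilon>0\). The case \(t=n_c\) is trivial, since both sides are zero.
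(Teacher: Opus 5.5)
Your proof is correct and is essentially the paper's argument. The paper first uses monotonicity in \(z\) to identify the per-factor infima \(f_s^-\), which gives the sharper intermediate bound \(\sum_s\log\{f_s(z_0)/f_s^-\}\), but it then concludes using only \(f_s^-\ge1-\lambda_s\) and \(f_s(z_0)=1-\lambda_s\) on dates with \(Z_s(\theta_0)\le z_0\), exactly as you do. One slip to fix: in your displayed case claim, \(1-\lambda_s+\lambda_sX_s(z_0)\) equals \(1-\lambda_s\), not \(1\), when \(Z_s\le z_0\); your final argument does not use that display, so you can simply delete it.
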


\ProofInAppendix{proof:prop-profiling-cost-decomposition}

The VaR-profiling cost is bounded using only the dates at which the loss exceeds the true VaR,
although all reports, including those at dates with \(\lambda_s=0\), restrict
\(\bar z_t(\theta)\).  Changing \(\theta\) can change the standardized residual
\(Z_s(\theta)\) and the standardized ES forecast \(\tilde r_s(\theta)\) at every
date. The same argument therefore does not give an upper bound on the cost
of taking the infimum over \(\theta\) that uses only dates with
\(Z_s(\theta_0)>z_0\).

For the e-power comparisons that follow, fix a distribution \(F\), write
\(z_0=\VaR_p(F)\), and consider a constant report \(\tilde r\) with
\(z_0\le\tilde r-\varepsilon\) and \(\tilde r<\ES_p(F)<\infty\).
For a feasible value \(z\le\tilde r-\varepsilon\) and a constant betting process
\((\lambda_t)\) with \(\lambda_t=u\in[0,1-p]\), we follow the expected log
growth criterion of \citet[Section~5.1, equation~(12)]{wang-wang-ziegel-2026}
and define
\[
g_F(z;u)=\int\log\left\{1-u+u\,e_p^{\mathrm{ES}}(x,\tilde r,z)\right\}\,\mathrm dF(x).
\]
This is the expected logarithm of the capital multiplier for one observation,
with \(z\) and \(u\) fixed. We also define
\[
\gamma^\star
=
\inf_{z\le\tilde r-\varepsilon}\ \max_{0\le u\le1-p}g_F(z;u),
\qquad
g^\star
=
\max_{0\le u\le1-p}g_F(z_0;u).
\]
For each feasible VaR, we maximize the growth rate over constant betting processes.  Taking the infimum of these maxima gives
\(\gamma^\star\), while the maximum at the true VaR is
\(g^\star\), so \(\gamma^\star\le g^\star\).

Suppose that \(\theta_0\in\Theta\),
\(\tilde r_t(\theta_0)=\tilde r\) throughout monitoring, and
\begin{equation}
\label{eq:fixed-innovation-growth}
Z_t(\theta_0)\mid\mathcal F_{t-1}\sim F,
\qquad n_c<t\le T,
\end{equation}
where \(F\) does not depend on the date or the observed history.
Then \(\ES_p(Z_t(\theta_0)\mid\mathcal F_{t-1})=\ES_p(F)\). This assumption gives iid innovations during
monitoring; it is not needed for
Proposition~\ref{thm:full-parameter-profile-validity} or
Proposition~\ref{prop:profiling-cost-decomposition}.

This includes iid losses: taking \(\theta_0=(0,0,0,1)\) gives
\(\mu_t(\theta_0)=0\), \(\sigma_t(\theta_0)=1\), and
\(Z_t(\theta_0)=L_t\), with \(\tilde r=r\).
If we restrict the parameter set to \(\Theta=\{\theta_0\}\), then
\(M_t(\theta_0,z;\lambda)=M_t(z;\lambda)\) from
Section~\ref{sec:confidence-profile}, and
\(M_t(\lambda)=\inf_{z\le r-\varepsilon}M_t(z;\lambda)\).

\begin{proposition}
\label{prop:profiling-epower}
Under the conditions above, for every \(T>n_c\) and predictable
\(\lambda\) with values in \([0,1-p]\),
\[
\frac1{T-n_c}\E\log M_T(\lambda)
\le\gamma^\star,
\qquad
\frac1{T-n_c}\E\log M_T(\theta_0,z_0;\lambda)
\le g^\star.
\]
Moreover, \(g^\star>0\), and the bound \(g^\star\) is attained by
\(\lambda_t=u^\star\) at every monitoring date, where
\(u^\star\in\operatorname*{argmax}_{0\le u\le1-p}g_F(z_0;u)\).
\end{proposition}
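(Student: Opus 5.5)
The plan is to reduce both inequalities to one-step conditional expected-log-growth bounds and then sum over the monitoring dates. Under \eqref{eq:fixed-innovation-growth} and the constant report, the factor at \((\theta_0,z)\) is \(1-\lambda_s+\lambda_s e_p^{\mathrm{ES}}(Z_s(\theta_0),\tilde r,z)\). Since \(\lambda_s\) is \(\mathcal F_{s-1}\)-measurable and \(Z_s(\theta_0)\) has conditional law \(F\), this gives
\[
\E\bigl[\log\{1-\lambda_s+\lambda_sX_s(\theta_0,z)\}\mid\mathcal F_{s-1}\bigr]=g_F(z;\lambda_s)\le\max_{0\le u\le1-p}g_F(z;u).
\]
Each factor is at least \(p>0\), so the logarithm is bounded below and the conditional expectations are well defined, possibly \(+\infty\). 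The value \(+\infty\) is excluded for feasible \(z<\tilde r\) since \(\ES_p(F)<\infty\) gives \(\int x_+\,\mathrm dF<\infty\). Summing over \(s\) and taking expectations yields \(\E\log M_T(\theta_0,z;\lambda)\le (T-n_c)\max_u g_F(z;u)\) for every fixed \(z\le\tilde r-\varepsilon\). Taking \(z=z_0\), which is feasible by assumption, gives the second inequality.

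For the first inequality, the key observation is that \((\theta_0,z)\) is feasible at time \(T\) for every fixed \(z\le\tilde r-\varepsilon\), because \(\bar z_T(\theta_0)=\tilde r-\varepsilon\). Hence \(M_T(\lambda)\le M_T(\theta_0,z;\lambda)\) pointwise, so \(\E\log M_T(\lambda)\le(T-n_c)\max_ug_F(z;u)\) for every such \(z\). Taking the infimum over \(z\) then gives \(\gamma^\star\). The order of quantifiers matters here: the bound must be proved for each deterministic \(z\) before the infimum is taken, which is exactly what makes the profiled process pay \(\gamma^\star\) rather than \(g^\star\). A measurability point also needs checking: \(M_T(\lambda)\) is a random variable, which holds since the infimum can be reduced to a countable dense set by continuity. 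Either way, \(\E\log\) of a measurable minorant is dominated.

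For attainment, take \(\lambda_t=u^\star\) constant. The one-step computation above then holds with equality, so the expected log growth per step is exactly \(g_F(z_0;u^\star)=g^\star\). Existence of a maximizer \(u^\star\) follows from concavity and upper semicontinuity of \(u\mapsto g_F(z_0;u)\) on the compact interval \([0,1-p]\). Continuity follows by dominated or monotone convergence: the integrand is at least \(\log p\) and at most \(\log(1+e^{\mathrm{ES}}_p)\), which is integrable because \(F\) has a finite positive tail.

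The main obstacle is showing \(g^\star>0\). We have \(g_F(z_0;0)=0\). The right derivative at \(u=0\) is \(\int(e_p^{\mathrm{ES}}(x,\tilde r,z_0)-1)\,\mathrm dF\), which equals \((\ES_p(F)-\tilde r)/(\tilde r-z_0)>0\) by \eqref{eq:ru-lower-quantile}, using \(\tilde r<\ES_p(F)\) and \(\tilde r-z_0\ge\varepsilon>0\). To justify differentiating under the integral, I would use the fact that the difference quotient \(u^{-1}\log(1-u+ue)\) is monotone in \(u\) by concavity. It therefore increases to \(e-1\) as \(u\downarrow0\), with integrable bounds. A positive right derivative at \(0\) then gives some \(u>0\) with \(g_F(z_0;u)>0\), and hence \(g^\star>0\).
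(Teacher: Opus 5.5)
Your proof is correct and follows the paper's route. You bound the one-step conditional log growth by \(\max_u g_F(z;u)\), sum over dates, and use \(M_T(\lambda)\le M_T(\theta_0,z;\lambda)\) for each feasible \(z\) before taking the infimum; attainment then comes from the constant bet \(u^\star\), and \(g^\star>0\) from the positive right derivative at \(u=0\). The only differences are minor: you justify differentiation under the integral through monotonicity of the concave difference quotient, where the paper uses dominated convergence, and you check measurability of \(M_T(\lambda)\) explicitly, which the paper handles in its validity proof.
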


\ProofInAppendix{proof:prop-profiling-epower}

The proposition bounds expected log growth per monitoring observation.
Optimizing the two betting processes separately gives the following lower
bound on the loss of total e-power:
\[
\sup_{\lambda}\E[\log M_T(\theta_0,z_0;\lambda)]
-\sup_{\lambda}\E[\log M_T(\lambda)]
\ge (T-n_c)(g^\star-\gamma^\star)\ge0.
\]
Both suprema range over predictable betting processes with values in
\([0,1-p]\).
Thus \(g^\star-\gamma^\star\) is a lower bound on the loss
per observation, and the total bound includes all \(T-n_c\) monitoring
observations.

The next result concerns power at the true parameter \(\theta_0\), with
VaR still unreported. Retain the conditions above and suppose that the
constant-report condition and \eqref{eq:fixed-innovation-growth} hold for
all \(t>n_c\).

\begin{proposition}
\label{prop:profile-asymptotic-power}
The test based on \(M_t(\theta_0;\lambda)\) has asymptotic power one for
any sufficiently small positive constant betting process \(\lambda\):
for every \(c>1\),
\[
\lim_{T\to\infty}
\Pp\!\left(\max_{n_c<t\le T}M_t(\theta_0;\lambda)\ge c\right)=1.
\]
\end{proposition}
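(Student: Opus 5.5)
The approach is to show that, for a small enough constant bet \(\lambda_t=u\), the normalized log capital \(\frac1n\log M_{n_c+n}(\theta_0;\lambda)\) is eventually bounded below by a positive constant, uniformly over the feasible residual VaR values. Then \(M_t(\theta_0;\lambda)\to\infty\) almost surely, and for every \(c>1\) the threshold is crossed with probability tending to one.

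Write \(Z_s=Z_s(\theta_0)\), which are iid with distribution \(F\) during monitoring by \eqref{eq:fixed-innovation-growth}. The report is constant, so \(\bar z_t(\theta_0)=\tilde r-\varepsilon\) for every \(t\), and
\[
\log M_t(\theta_0;\lambda)=\inf_{z\le\tilde r-\varepsilon}\sum_{s=n_c+1}^t\log\{1-u+uX_s(\theta_0,z)\}.
\]

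\emph{Step 1: uniform positive drift in the mean.} By \eqref{eq:ru-lower-quantile} and the Rockafellar--Uryasev minimization formula, \(z+\frac1{1-p}\E(Z-z)_+\ge\ES_p(F)\) for every \(z\). Hence, for \(z\le\tilde r-\varepsilon\),
\[
\E X(z)-1\ \ge\ \frac{\ES_p(F)-\tilde r}{\tilde r-z}>0 .
\]
This bound alone degenerates as \(z\to-\infty\), so I also use that \(\E(Z-z)_+=\E Z-z+\E(Z-z)_-\) with \(\E(Z-z)_-\to0\), which gives \(\E X(z)\to1/(1-p)>1\). Since \(\E X(z)\) is continuous in \(z\), it follows that
\[
\delta:=\inf_{z\le\tilde r-\varepsilon}\{\E X(z)-1\}>0 .
\]
(Here \(\E Z\) is finite because \(\ES_p(F)<\infty\) and, the lower tail of \(Z\) being integrable or not, only \(\E Z_+<\infty\) is really needed, handled by monotone convergence.)

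\emph{Step 2: truncation and an elementary log bound.} I cannot assume second moments, so I truncate. Because \(\log\) is increasing, \(\log(1-u+uX)\ge\log(1-u+u\,(X\wedge K))\). I choose \(K\) such that \(\inf_z\E[X(z)\wedge K]-1\ge\delta/2\). This requires uniformity in \(z\), which I obtain by working on the compactified range \([-\infty,\tilde r-\varepsilon]\), where \(X(-\infty)\equiv1/(1-p)\). The map \(z\mapsto\E[X(z)\wedge K]\) is continuous there and increases to \(\E X(z)\) as \(K\to\infty\), so Dini's theorem gives uniform convergence.

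For \(y=X\wedge K-1\in[-1,K-1]\), the inequality \(\log(1+uy)\ge uy-u^2y^2/(1-u)\) gives per-observation expected growth at least \(u\delta/2-u^2K^2/(1-u)\). This is positive for all sufficiently small \(u>0\), which explains the restriction to "sufficiently small" bets in Proposition~\ref{prop:profile-asymptotic-power}.

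\emph{Step 3: uniform strong law (main obstacle).} I then need
\[
\inf_z\frac1n\sum_s\log(1-u+u\,(X_s(z)\wedge K))\ \to\ \inf_z\E\log(1-u+u\,(X(z)\wedge K))>0
\]
almost surely. On the compact index set \([-\infty,\tilde r-\varepsilon]\), the summands are bounded between \(\log(1-u)\) and \(\log(1-u+uK)\), and they are continuous in \(z\) for every \(Z\). Continuity at \(-\infty\) holds since \((Z-z)/(r-z)\to1\). A standard bracketing or Glivenko--Cantelli argument therefore gives the uniform strong law: cover the index set by finitely many small intervals, bound the oscillation on each by dominated convergence, and apply the ordinary SLLN to the bracket functions. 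This is the step where care is needed, in particular the continuity at the compactified endpoint and near \(z=\tilde r-\varepsilon\), where the denominator stays at least \((1-p)\varepsilon\).

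It then follows that \(\log M_{n_c+n}(\theta_0;\lambda)\ge n\kappa\) eventually, for some \(\kappa>0\), almost surely. This forces \(\max_{n_c<t\le T}M_t(\theta_0;\lambda)\ge c\) eventually, and the probability statement follows by dominated convergence.
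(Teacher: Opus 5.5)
Your proposal is correct and follows essentially the paper's route: a uniform positive mean drift on the compactified VaR range (which the paper parametrizes by \(a=1/(\tilde r-z)\in[0,\varepsilon^{-1}]\)), a small-bet bound on expected log growth, and a bracketing uniform strong law; your truncation at \(K\) with \(\log(1+x)\ge x-x^2/(1-u)\) replaces the paper's dominated-convergence limit of \(h_u/u\) and has the minor advantage of giving an explicit admissible \(u\). One slip to fix: \(\ES_p(F)<\infty\) does not make \(\E Z\) finite, so instead obtain \(\E X(z)\to1/(1-p)\) and continuity in \(z\) directly by dominated convergence with the integrable envelope \(X(z)\le\{1+(Z-\tilde r)_+/\varepsilon\}/(1-p)\), which also yields the uniform truncation estimate without Dini.
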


\ProofInAppendix{proof:prop-profile-asymptotic-power}

\section{Choosing the betting process}
\label{sec:betting-selection}
We now specify the common betting process \((\lambda_t)\) used
in every candidate product and hence in \(M_t(\lambda)\).
Proposition~\ref{thm:full-parameter-profile-validity} allows any predictable
\(\lambda_t\in[0,1-p]\), but its choice affects power. We choose \(\lambda_t\)
using an empirical growth rate computed from past standardized residuals, as described below.

We prove that, under the conditions below, maximizing the expected log
growth of a single ES betting factor over constant choices \(0\le u<1\)
gives an optimizer smaller than \(1-p\). The larger
range for \(u\) is used only in this comparison; our tests use
\(\lambda_t\in[0,1-p]\).

\begin{proposition}
\label{prop:constant-growth-optimum}
Let \(z_0=\VaR_p(F)\) and \(z_0<\tilde r<\ES_p(F)<\infty\). Then
the expected log-growth function \(u\mapsto g_F(z_0;u)\), defined by the
same formula for \(0\le u<1\), has a unique maximizer
\(u^\star\in(0,1-p)\).
\end{proposition}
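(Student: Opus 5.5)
Write \(X=e_p^{\mathrm{ES}}(L,\tilde r,z_0)\) with \(L\sim F\). Then \(g_F(z_0;u)=\E\log(1-u+uX)\) for \(0\le u<1\). Since \(z_0<\tilde r\), the factor \(1-u+uX\ge1-u>0\) on this range, so the integrand is well defined. The plan is to establish three things: strict concavity in \(u\), a positive right derivative at \(u=0\), and a negative derivative at \(u=1-p\). Existence and uniqueness of a maximizer in \((0,1-p)\) then follow directly.

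\emph{Step 1: finiteness and concavity.} Because \(\E[X]<\infty\) (as \(\ES_p(F)<\infty\)) and \(\log(1-u+uX)\ge\log(1-u)\), the function \(g_F(z_0;u)\) is finite on \([0,1)\). The map \(u\mapsto\log(1-u+uX)\) is concave, and it is strictly concave unless \(X=1\) almost surely. However, \(X=0\) on \(\{L\le z_0\}\), which has probability at least \(p>0\), so \(X\) is not almost surely one. Hence \(g_F(z_0;\cdot)\) is strictly concave, and any maximizer is unique.

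\emph{Step 2: right derivative at \(0\).} By dominated or monotone convergence applied to the difference quotients (which are monotone in \(u\) by concavity), the right derivative at \(0\) equals \(\E[X-1]\). By \eqref{eq:ru-lower-quantile}, \(\E[X]=(\ES_p(F)-z_0)/(\tilde r-z_0)>1\) because \(\tilde r<\ES_p(F)\). So the right derivative at \(0\) is positive, and the maximizer is not \(0\).

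\emph{Step 3: derivative at \(1-p\).} The derivative is
\[
g'(u)=\E\left[\frac{X-1}{1-u+uX}\right].
\]
Differentiation under the integral is justified because the integrand is bounded by \(1/u+1/(1-u)\) on compact subsets of \((0,1)\). Split according to whether \(L\le z_0\) or \(L>z_0\). On \(\{X=0\}\), which has probability \(q:=F(z_0)\ge p\), the integrand equals \(-1/(1-u)\). On \(\{X>0\}\), the integrand is strictly less than \(1/u\) and satisfies \(<1/u\) pointwise. Hence
\[
g'(u)<\frac{1-q}{u}-\frac{q}{1-u},
\]
provided \(\Pp(X>0)=1-q>0\), and the strict inequality holds because \(X<\infty\). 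At \(u=1-p\) the right-hand side equals \((1-q)/(1-p)-q/p\le 0\), since \(q\ge p\). Thus \(g'(1-p)<0\).

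Step 3 also needs \(1-q>0\). If \(1-q=0\), then \(X=0\) almost surely, contradicting \(\E[X]>1\). The case \(q>p\), arising from an atom at \(z_0\), only strengthens the bound.

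Combining the steps: the function is concave and continuous on \([0,1-p]\), increasing at \(0\), and has negative derivative at \(1-p\). By concavity it is therefore decreasing on \([1-p,1)\). So the maximizer over \([0,1)\) lies in \((0,1-p)\) and is unique by Step 1. I expect the main obstacle to be Step 3, specifically making the strict inequality and the interchange of differentiation and expectation rigorous when \(F\) has an atom at \(z_0\).
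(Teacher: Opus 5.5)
Your proof is correct and follows essentially the paper's route: strict concavity from the mass of \(X\) at zero, right derivative \(\E[X]-1>0\) at the origin, and a negative derivative at a point no larger than \(1-p\). The only difference is the evaluation point: the paper evaluates the derivative at \(1-F(z_0)\), where the \(\{X=0\}\) contribution cancels exactly and gives the slightly sharper \(u^\star<1-F(z_0)\), whereas your pointwise bound \((X-1)/(1-u+uX)<1/u\) handles \(u=1-p\) directly.
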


\ProofInAppendix{proof:prop-constant-growth-optimum}

\begin{example}
\label{ex:betting-upper-bound}
The endpoint \(1-p\) is sharp: let \(F\) put mass \(p\) at \(z_0\) and mass
\(1-p\) at \(z_0+(1-p)(\tilde r-z_0)K\) with \(K>(1-p)^{-1}\).  Then
\(z_0=\VaR_p(F)\) and \(\ES_p(F)>\tilde r\), \(e_p^{\mathrm{ES}}(x,\tilde r,z_0)\)
equals \(K\) at the upper point and zero at \(z_0\), and
\(u^\star=\frac{(1-p)K-1}{K-1}\longrightarrow 1-p\) as \(K\to\infty\).
Hence no interval \([0,b]\) with \(b<1-p\) contains an optimal constant value of the
betting process \((\lambda_t)\) for every alternative in this class.
\end{example}
This gives a one-step justification for the cap.
We choose \((\lambda_t)\) using the standardized residuals \(Z_t(\theta)\)
and ES forecasts \(\tilde r_t(\theta)\) computed on a fixed finite
parameter grid \(\mathcal G\subseteq\Theta\). The numerical choices are specified in
Section~\ref{sec:numerical-implementation}.
This grid is used only to choose \((\lambda_t)\); the infimum defining
\(M_t(\lambda)\) still covers all feasible pairs \((\theta,z)\).
At each time \(t>n_c\), let \(\mathcal I_t\) be a fixed nonempty subset of
\(\{1,\ldots,t-1\}\). When choosing \(\lambda_t\), we retain one
representative of grid points with identical standardized losses at all
earlier dates and identical standardized ES forecasts through date \(t\).
The candidate sets below contain these representatives.
For each retained grid point \(\theta\),
let the empirical VaR \(\widehat z_t(\theta)\) be the sample \(p\)-quantile of
\(\{Z_s(\theta):s\in\mathcal I_t\}\), using linear interpolation at
rank \(1+(|\mathcal I_t|-1)p\). Let \(\Theta_t^{\mathrm a}\) denote the set
of grid points \(\theta\) having at least one exceedance of
\(\widehat z_t(\theta)\) and satisfying
\(\tilde r_s(\theta)>\widehat z_t(\theta)\) for every \(s\in\mathcal I_t\) with \(Z_s(\theta)>\widehat z_t(\theta)\).
The current report further restricts the grid points to
\(\mathcal C_t=\{\theta\in\Theta_t^{\mathrm a}: \tilde r_t(\theta)>\widehat z_t(\theta)\}\).
For \(\theta\in\mathcal C_t\) and \(s\in\mathcal I_t\), define
\[
\widehat X_t(\theta,s)
=\frac{\{Z_s(\theta)-\widehat z_t(\theta)\}_+}
{(1-p)\{\tilde r_t(\theta)-\widehat z_t(\theta)\}}.
\]
The current standardized ES forecast thus reevaluates each past standardized residual.

If \(\mathcal C_t\ne\varnothing\), define
\[
\widehat g_t(u) =\frac{1}{|\mathcal C_t|\,|\mathcal I_t|}
\sum_{\theta\in\mathcal C_t}\sum_{s\in\mathcal I_t}
\log\{1-u+u\widehat X_t(\theta,s)\},
\qquad
\widehat\lambda_t=\operatorname*{argmax}_{0\le u\le1-p}\widehat g_t(u).
\]
Each \((\theta,s)\) pair has equal weight, including zero and repeated
observations. This objective is an empirical growth rate that averages the
logarithmic terms above. Each retained grid point has at least one
observation at or below its empirical VaR, contributing a \(\log(1-u)\) term, so the objective is strictly concave and its maximizer is unique.

\begin{remark}
\label{rem:minimum-growth-selection}
For nonempty \(\mathcal C_t\), an alternative is to choose \(\lambda_t\) by maximizing
\[
\widehat g_t^{\min}(u)
=\min_{\theta\in\mathcal C_t}\frac1{|\mathcal I_t|}
\sum_{s\in\mathcal I_t}
\log\{1-u+u\widehat X_t(\theta,s)\},
\qquad 0\le u\le1-p,
\]
while retaining the same infimum over cumulative products in
\eqref{eq:full-parameter-profile}. Under this criterion, a single candidate
whose empirical mean of \(\widehat X_t(\theta,s)\) is at most one forces a
zero bet, even when \(\widehat g_t\) favors a positive value. If this happens
throughout monitoring, the capital remains one. This motivates using the
average objective \(\widehat g_t\).
\end{remark}

Only grid points in \(\mathcal C_t\) enter the maximization of \(\widehat g_t\).
Define
\begin{equation}
\label{eq:betting-rule}
\lambda_t=
\begin{cases}
\widehat\lambda_t,&\mathcal C_t\ne\varnothing,\\
1-p,&\mathcal C_t=\varnothing.
\end{cases}
\end{equation}
When \(\mathcal C_t\ne\varnothing\), a zero maximizer means that a positive
bet does not improve the empirical log growth, so we retain zero.
When \(\Theta_t^{\mathrm a}\ne\varnothing\) but \(\mathcal C_t=\varnothing\),
the current ES report is at or below the empirical VaR for every retained
grid point. Since ES is at least VaR, this comparison suggests
under-reporting and motivates the largest permitted bet, \(1-p\).
When \(\Theta_t^{\mathrm a}=\varnothing\), no grid point satisfies the
historical conditions needed to evaluate the objective. We use the same
bet \(1-p\) in this case, so an undefined objective alone cannot keep
the betting process at zero. The strict inequality defining \(\mathcal C_t\) ensures a
positive denominator in \(\widehat X_t\); the margin \(\varepsilon\) is used
in the feasibility constraint for \(M_t(\lambda)\).

\section{Dense-grid implementation}
\label{sec:profile-computation}

To compute \(M_t(\lambda)\) in \eqref{eq:full-parameter-profile}, we first
consider a reference calculation that evaluates the candidate products on
a finite grid of \((\theta,z)\) and takes their minimum over the feasible
grid points. This approximates the infimum using
the same products, feasibility conditions, and betting process as in the
definition of \(M_t(\lambda)\). We refine the VaR calculation by adaptive
interval subdivision. We describe the grid construction and its
convergence under refinement. The numerical resolutions are given in
Section~\ref{sec:numerical-implementation}.

\subsection{Numerical implementation}
\label{sec:numerical-implementation}

We first choose \((\lambda_t)\) on a betting grid
\(\mathcal G\) using the rule in Section~\ref{sec:betting-selection}, then
form the reference capital approximation on a separate grid \(\mathcal N\).
In the numerical comparisons, \(\mathcal G\) remains fixed when
\(\mathcal N\) is refined, so each capital calculation uses the same
betting sequence.

The betting grid \(\mathcal G\)
consists of \(\phi\in\{-0.9,-0.8,\ldots,0.9\}\),
\(\alpha\in\{0,0.1,\ldots,0.8\}\), \(\beta\in\{0,0.1,\ldots\}\) with
\(\alpha+\beta\le0.9995\), and \(v\in\{0,1,\ldots,50\}\), giving 52,326
parameter values.
We use \(\mathcal I_t=\{21,\ldots,t-1\}\), omitting the first 20 calibration
observations when estimating the empirical VaR and growth rate.

To approximate the e-process \(M_t(\lambda)\), we divide the ranges of
\(\phi,\alpha,v\) in Section~\ref{sec:residual-quantile-null} into equal
intervals, including their endpoints. At each value of \(\alpha\), we similarly divide
\([0,\min\{\beta_{\max},1-\eta_\Theta-\alpha\}]\) for \(\beta\), so every
parameter grid point satisfies the persistence constraint.
In the numerical studies, the maximum spacing is \(0.01\) in
\(\phi,\alpha,\beta\). The subdivision counts are rounded up to multiples
of eight. For the initial variance \(v\), the grid covers \([0,50]\)
with maximum spacing one in the reference calculation.

The residual VaR has no finite lower endpoint. To parameterize its range
for the reference grid, we write
\(z=\bar z_{n_c+1}(\theta)-\frac{w}{1-w}\), \(0\le w<1\),
so that \(w\) ranges over a bounded interval while \(z\) covers every value
feasible at the first monitoring date. The endpoint \(w=1\) corresponds to
\(z\to-\infty\), where each factor tends to \(1-\lambda_s+\lambda_s/(1-p)\).
We therefore set
\[
M_t(\theta,-\infty;\lambda)
=\prod_{s=n_c+1}^t\left(1-\lambda_s+\frac{\lambda_s}{1-p}\right),
\]
which does not depend on \(\theta\).
For this reference calculation, the grid \(W\) contains 129 equally
spaced values of \(w\) on \([0,1]\).

Write \(\mathcal N\) for the resulting pairs \((\theta,z)\), including
the endpoint \(z=-\infty\). We denote the numerical approximation to
\(M_t(\lambda)\) on this grid by \(M_t(\mathcal N;\lambda)\), computed as
\begin{equation}
\label{eq:dense-grid-profile}
M_t(\mathcal N;\lambda)
=\min_{\substack{(\theta,z)\in\mathcal N\\ z\le\bar z_t(\theta)}}
M_t(\theta,z;\lambda),
\end{equation}
and \(M_{n_c}(\mathcal N;\lambda)=1\). Each candidate uses the same pair
\((\theta,z)\) in every factor. Once a candidate becomes infeasible, later
reports cannot restore it. The points with \(w=1\) keep the minimum finite.

The final capital calculations use \(v\in\{0,0.1,\ldots,10^4\}\), with
the same coefficient grid and betting process. On this parameter grid,
we profile VaR continuously by adaptive interval subdivision. Write
\(A_s(\theta,z)=1-\lambda_s+\lambda_sX_s(\theta,z)\).
Each factor is nondecreasing in \(z\) when \(L_s\ge r_s\) and
nonincreasing otherwise. Thus, on a feasible interval \([a,b]\),
\[
\log M_t(\theta,z;\lambda)\ge
B_t(\theta;[a,b])
:=\sum_{s=n_c+1}^{t}
\min\{\log A_s(\theta,a),\log A_s(\theta,b)\},
\]
including \(a=-\infty\) through the limiting factors above.
Taking the minimum of interval lower bounds over a cover of
\((-\infty,\bar z_t(\theta)]\) and all parameter grid points gives
\(\underline\ell_t^{\mathrm V}\), a lower bound for continuous VaR
profiling on that grid. The smallest log capital at evaluated feasible
points gives an upper bound \(\overline\ell_t^{\mathrm V}\).
We evaluate these bounds with directed rounding and subdivide intervals
that can lower the minimum, targeting
\(\overline\ell_t^{\mathrm V}-\underline\ell_t^{\mathrm V}\le0.01\)
at every monitoring date.
These bounds control the error in continuous VaR profiling on the chosen
parameter grid. They do not bound the error from discretizing the AR-GARCH
parameters \(\theta\).

For each \((\phi,\alpha,\beta)\) on the coefficient grid, we bound the
log capital over groups of \(v\)-grid points and all feasible \(z\).
A group needs no further refinement if its lower bound is at least the
current upper bound minus the prescribed tolerance at every monitoring
date. Otherwise, we subdivide the VaR intervals and split the group of
\(v\)-values when needed. We retain the lower bound for every group and
compute all monitoring dates in a single pass through the coefficient grid.

The tables use \(\widehat M_t(\lambda)=\exp\{\underline\ell_t^{\mathrm V}\}\)
and record the first date with \(\underline\ell_t^{\mathrm V}\ge\log c\)
as detection at threshold \(c\); an upper-bound crossing alone is not
counted. Paths whose reference-grid capital never reaches \(2\) are
classified as nonrejections without refinement. All refined paths attained
the stated tolerance.
The grid specification and the chosen \((\lambda_t)\) are saved with each
result.
Appendix~\ref{app:grid-sensitivity} compares nested capital grids while
holding the betting process fixed.

Interval subdivision over \((\phi,\alpha,\beta)\) can provide rigorous
bounds, but repeated occurrences of the same parameters in the recursive
filters make these bounds too loose for practical computation on the
parameter domain considered here.

\subsection{Convergence of the grid approximation}
\label{sec:capital-approximation}

To establish convergence of the grid approximation \(M_t(\mathcal N;\lambda)\) in
Section~\ref{sec:numerical-implementation}, we consider nested refinements that
double each subdivision count and the number of finite values of \(w\).
The losses, reports, and parameter domain are held fixed. The capital
approximation below is uniform over betting vectors, allowing them to vary
across grids. The threshold-crossing conclusion uses a common betting
process.

Fix a finite monitoring endpoint \(T>n_c\), and write \(n=T-n_c\).
Suppose \(\Theta\) is nonempty and
compact, the losses and reports are finite, and the computed location and
scale paths are continuous in \(\theta\), with positive scales.
Let \(\mathcal N_1\subseteq\mathcal N_2\subseteq\cdots\) be grids whose union
is dense in \(\Theta\times[0,1]\) in the coordinates \((\theta,w)\).
Each grid includes the candidate \(z=-\infty\), corresponding to \(w=1\).

\begin{proposition}
\label{prop:dense-grid-convergence}
For every path satisfying the preceding conditions, the grid approximation
converges to the theoretical process uniformly over monitoring dates and
betting vectors:
\begin{equation}
\label{eq:dense-grid-uniform-betting}
\sup_{\lambda\in[0,1-p]^{T-n_c}}
\max_{n_c\le t\le T}
\bigl\{M_t(\mathcal N_k;\lambda)-M_t(\lambda)\bigr\}
\downarrow0.
\end{equation}
Under any probability law \(P\) satisfying these conditions almost surely,
this convergence also holds in \(L^q(P)\) for every \(1\le q<\infty\).
For a predictable betting process \(\lambda\) taking values in
\([0,1-p]\) and common to all grids, and every \(c>1\),
\[
P\!\left(\max_{n_c<t\le T}M_t(\mathcal N_k;\lambda)\ge c\right)
\downarrow
P\!\left(\max_{n_c<t\le T}M_t(\lambda)\ge c\right).
\]
Under the null of Proposition~\ref{thm:full-parameter-profile-validity},
the limiting probability is at most \(1/c\).
\end{proposition}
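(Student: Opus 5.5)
The plan is to show that on a fixed path the grid values decrease monotonically to \(M_t(\lambda)\), and to get uniformity from continuity of the candidate products on a compact set. The probabilistic statements then follow from boundedness and continuity of measure.

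\emph{Step 1: monotonicity.} Reparametrize each candidate by \(w\in[0,1]\) through \(z=\bar z_{n_c+1}(\theta)-w/(1-w)\). Define \(f_t(\theta,w,\lambda)=M_t(\theta,z;\lambda)\) for \(w<1\), and set \(f_t(\theta,1,\lambda)\) equal to the limiting product \(M_t(\theta,-\infty;\lambda)\). Since \(\bar z_t(\theta)\) is continuous in \(\theta\), feasibility at date \(t\) reads \(w\ge w_t(\theta)\) for a continuous function \(w_t\) with values in \([0,1)\). Hence the feasible set \(K_t=\{(\theta,w):w_t(\theta)\le w\le1\}\) is compact.

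Every feasible grid candidate with \(w<1\) is a feasible pair in \eqref{eq:full-parameter-profile}. The value at \(w=1\) is a limit of feasible products as \(z\to-\infty\). Hence \(M_t(\mathcal N_k;\lambda)\ge M_t(\lambda)\). Because the grids are nested, \(M_t(\mathcal N_k;\lambda)\) is nonincreasing in \(k\), so the differences in \eqref{eq:dense-grid-uniform-betting} are nonnegative and nonincreasing.

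\emph{Step 2: joint continuity.} On \(K_t\times[0,1-p]^{n}\), feasibility gives \(\tilde r_s(\theta)-z\ge\varepsilon\) for all \(n_c<s\le t\). Therefore each \(X_s(\theta,z)\) is continuous in \((\theta,z)\), and each factor is continuous in \((\theta,z,\lambda)\). Continuity at \(w=1\) follows because \(Z_s(\theta)\) and \(\tilde r_s(\theta)\) are bounded on the compact set \(\Theta\). This boundedness gives \(X_s(\theta,z)\to1/(1-p)\) as \(z\to-\infty\), uniformly in \(\theta\). Consequently \(f_t\) is continuous on \(K_t\times[0,1-p]^n\), hence uniformly continuous, with some modulus \(\omega\). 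Since \(t\) ranges over finitely many dates, one modulus serves all \(t\).

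\emph{Step 3: the main obstacle, approximation near the feasibility boundary.} A dense grid need not contain feasible points near a minimizer that lies on the curve \(w=w_t(\theta)\). I would handle this by pushing inward in \(w\). Fix \(\lambda\) and \(t\), and let \((\theta^*,w^*)\) attain the infimum, which exists by compactness. Replace it by \((\theta^*,\min\{w^*+\eta,1\})\); its value exceeds the infimum by at most \(\omega(\eta)\).

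Uniform continuity of \(w_t\) on \(\Theta\) gives a \(\rho(\eta)>0\), independent of \(\theta^*\) and \(\lambda\), with the following property. Every \((\theta',w')\) within \(\rho\) of this shifted point with \(w'\le 1\) is feasible (when \(w^*+\eta\ge1\), one uses the always-feasible point \(w'=1\), which is included in every grid). Because \(\bigcup_k\mathcal N_k\) is dense in the compact set \(\Theta\times[0,1]\) and the grids are nested, compactness yields a \(k_0(\rho)\) such that \(\mathcal N_k\) is a \(\rho\)-net for all \(k\ge k_0\). For such \(k\),
\[
M_t(\mathcal N_k;\lambda)-M_t(\lambda)\le\omega(\eta)+\omega(\rho)\le 2\,\omega(\eta),
\]
after shrinking \(\rho\) so that \(\omega(\rho)\le\omega(\eta)\). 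This bound is uniform over \(\lambda\) and \(t\), which proves \eqref{eq:dense-grid-uniform-betting}.

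\emph{Step 4: the probabilistic conclusions.} For \(L^q\) convergence, use the domination
\[
0\le M_t(\lambda)\le M_t(\mathcal N_k;\lambda)\le M_t(\theta,-\infty;\lambda)\le (1+p)^{n},
\]
since each limiting factor \(1-\lambda_s+\lambda_s/(1-p)\) is at most \(1+p\) for \(\lambda_s\le1-p\). Pathwise convergence then gives \(L^q(P)\) convergence by dominated convergence. Measurability of \(M_t(\lambda)\) follows as a pointwise limit of the finite minima \(M_t(\mathcal N_k;\lambda)\).

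For the crossing probabilities, the events \(A_k=\{\max_t M_t(\mathcal N_k;\lambda)\ge c\}\) decrease in \(k\) and contain \(A=\{\max_t M_t(\lambda)\ge c\}\). Conversely, suppose \(\omega\in\bigcap_kA_k\). Since there are finitely many dates, some date \(t\) has \(M_t(\mathcal N_k;\lambda)\ge c\) for infinitely many \(k\), and letting \(k\to\infty\) gives \(M_t(\lambda)\ge c\). Hence \(\bigcap_kA_k=A\), and continuity from above gives the stated monotone convergence.

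Finally, under the null, Proposition~\ref{thm:full-parameter-profile-validity} bounds the limit \(P(A)\) by \(1/c\).
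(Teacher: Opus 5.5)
Your proof is correct, but it reaches uniformity by a different route from the paper.

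The paper argues pointwise in \(\lambda\) first. It takes an arbitrary feasible finite pair \((\theta,z)\) and moves to \((\theta,z-h)\) to obtain a strict report margin. It then uses continuity and density to find feasible grid points whose products approach \(M_t(\theta,z-h;\lambda)\), and lets \(h\downarrow0\). Uniformity over betting vectors comes afterwards. The paper bounds \(X_s(\theta,z)\le K_s\) on the feasible set, which makes the log products \(C_T\)-Lipschitz in \(\lambda\). It covers \([0,1-p]^{T-n_c}\) by finitely many balls, and then passes from the log errors \(D_k\) to \(\Delta_k\) by the mean value theorem.

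You instead compactify the feasible set in \((\theta,w)\) by including \(w=1\). You prove joint continuity of the product in \((\theta,w,\lambda)\), which at \(w=1\) needs the uniform limit \(X_s\to1/(1-p)\) from boundedness of \(Z_s(\theta)\) and \(\tilde r_s(\theta)\) on \(\Theta\). You then push an attained minimizer inward by a fixed \(\eta\), using uniform continuity of the boundary \(w_t\) and the \(\rho\)-net property of the nested grids. This gives uniformity over \(\lambda\) and \(t\) in one step, with no logarithms or Lipschitz constant. The price is checking continuity at the compactified endpoint and controlling the feasibility boundary uniformly; the paper needs only continuity at each interior candidate plus an explicit Lipschitz bound. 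A side benefit of your route is that measurability of \(M_t(\lambda)\) follows as a limit of finite grid minima, without the countable-dense-subset argument in the proof of Proposition~\ref{thm:full-parameter-profile-validity}.

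One detail is missing. The \(L^q(P)\) claim concerns \(\Delta_k\), a supremum over the uncountable set \([0,1-p]^{T-n_c}\), so you must show \(\Delta_k\) is measurable before applying dominated convergence. Your joint modulus \(\omega\) supplies this. The feasible set \(K_t\) and the feasible grid points do not depend on \(\lambda\), so \(\lambda\mapsto M_t(\lambda)\) and \(\lambda\mapsto M_t(\mathcal N_k;\lambda)\) are infima of families sharing the modulus \(\omega\), hence continuous. The supremum can therefore be taken over a countable dense set of betting vectors, which is how the paper handles it.
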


\ProofInAppendix{proof:prop-dense-grid-convergence}

\section{Simulation studies}
\label{sec:sim}

\subsection{Benchmark design}

The simulation study examines rejection under exact and conservative ES
forecasts, detection under ES under-reporting, and the cost of profiling.
We compare our ES-only procedures with VaR--ES e-processes and fixed-sample ES
backtests using the same simulated loss paths. All simulations use
\(p=0.975\).
In addition to benchmarks with their own betting processes \((\lambda_t)\) or
fixed-sample tests, we evaluate \(M_t(\theta_0,z_0;\lambda)\) at the
generating parameter--VaR pair using exactly the same betting process
\((\lambda_t)\) as the e-process \(M_t(\lambda)\). This comparison measures
the cost of not knowing the true parameter vector \(\theta_0\) and residual
VaR \(z_0\), with the betting process \((\lambda_t)\) held fixed.

\begin{remark}
We report detections at thresholds \(c=2,5,10\). For the exact
e-processes, these thresholds give type-I error bounds of \(50\%\), \(20\%\), and \(10\%\) over the entire monitoring
period, under the respective nulls of
Propositions~\ref{thm:iid-construction} and~\ref{thm:full-parameter-profile-validity}.
Actual error probabilities can be much lower.
These thresholds provide graded early warnings, roughly corresponding to
minor, substantial, and strong evidence against the forecasts, respectively.
\end{remark}

The tables report the numerical approximation \(\widehat M_t(\lambda)\)
from Section~\ref{sec:numerical-implementation}. The preceding type-I error bounds apply to
the limiting process \(M_t(\lambda)\), as described in
Proposition~\ref{prop:dense-grid-convergence}.

\noindent
In the time-series comparison, the GREE (growth-rate for empirical
e-statistics), GREL (growth-rate for empirical losses), and GREM
(growth-rate for empirical mixture) columns report the VaR--ES benchmarks
of \citet{wang-wang-ziegel-2026}, under the names used there. These methods
use the reported VaR and ES forecasts to compute \(e_p^{\mathrm{ES}}\).
In our simulations, we set the reported VaR equal to the true conditional
VaR. GREE uses
the preceding VaR--ES e-statistics to select its betting process
\((\lambda_t)\). GREL chooses the current value \(\lambda_t\) by recomputing the e-statistic for each preceding
loss using the current reported VaR and ES forecasts, both of
which are fixed before the current loss is observed.  GREM is the equally
weighted mixture of these two processes.
All three use all past observations from the monitoring period.  Following
\citet[Section~5.1 and Remark~7]{wang-wang-ziegel-2026}, we use the Taylor approximation to choose each betting process \((\lambda_t)\) with
values in \([0,1/2]\). Its first value is zero.

In the time-series comparison, the ESR column is the one-sided strict
intercept ESR backtest of \citet{bayer-dimitriadis-2022}, with its asymptotic
one-sided p-value.  For the iid comparison, constant ES forecasts make this
regression singular, and we use the intercept-only form in their
equation~(2.14). The ESR columns are rejection frequencies at a fixed
sample size, not probabilities of detection during the monitoring period.

The \(M_t(\theta_0,z_0;\lambda)\) column evaluates the product at the true
parameter vector \(\theta_0\), with \(v=v_0\), and residual VaR \(z_0\)
(Proposition~\ref{prop:profiling-cost-decomposition}). Each factor therefore
uses the true conditional VaR \(\mu_s(\theta_0)+\sigma_s(\theta_0)z_0\).
The betting process \((\lambda_t)\) is chosen
using the rule in Section~\ref{sec:betting-selection}. Both \(M_t(\lambda)\) and
\(M_t(\theta_0,z_0;\lambda)\) use these same values on the same loss path.
The gap between \(M_t(\theta_0,z_0;\lambda)\) and \(M_t(\lambda)\) measures
the joint cost of taking the infimum over the unknown parameters \(\theta\)
used to compute \(\mu_t(\theta)\) and \(\sigma_t(\theta)\), and over the
residual VaR. Appendix~\ref{app:profiling-comparison} separates these two
costs by also fixing \(\theta_0\) while profiling the residual VaR.
The GREE, GREL, and GREM comparisons continue to measure the
performance of complete procedures that choose different betting processes
\((\lambda_t)\).

The case \(\Theta_t^{\mathrm a}\ne\varnothing,\ \mathcal C_t=\varnothing\)
occurs on 0.43\% of simulated monitoring days,
mostly under under-reporting.

\subsection{Comparisons under iid losses}

For each loss distribution, we generate \(N=1000\) iid paths of 1200
observations. The first \(n_c=200\) observations on each path form the
calibration period. We reuse these paths and calibration samples across
monitoring horizons \(T-n_c=250,500,1000\), all three ES forecasts, and all
methods.

The constant ES forecast is \(r=m\,\ES_p(F)\), where \(F\) is the iid
loss distribution and \(m\in\{0.9,1.0,1.1\}\), corresponding to
under-reporting, exact reporting, and over-reporting.
The iid setting is a special case of the time-series setting considered here.
We take
\(\theta_0=(\phi_0,\alpha_0,\beta_0,v_0)=(0,0,0,1)\), giving
\(\mu_t(\theta_0)=0\) and \(\sigma_t(\theta_0)=1\).
Since \(\beta_0=0\), the initial value \(v_0\) does not affect these scales. When computing
\(M_t(\lambda)\), however, we use the parameter set in
\eqref{eq:simulation-parameter-set} with \(0\le v\le10^4\).
The capital grid has spacing \(0.1\) in \(v\), with the betting process
unchanged. Thus the iid experiment
also evaluates the time-series procedure without supplying these true
parameter values.

Table~\ref{tab:confidence-profile-iid} compares our two ES-only procedures
on iid losses: \(M_t^{\mathrm{cal}}(\lambda)\) from Section~\ref{sec:confidence-profile},
designed for iid losses, and the time-series procedure \(M_t(\lambda)\) from
Section~\ref{sec:argarch-profile}.
The process \(M_t^{\mathrm{cal}}(\lambda)\) uses the iid assumption to form a VaR
confidence set from the calibration observations, with \(\delta=0.001\)
and \(\Gamma=40\), as in Section~\ref{sec:confidence-profile}.
The process \(M_t(\lambda)\) instead takes the infimum over all feasible
pairs \((\theta,z)\), and the table reports its numerical approximation
\(\widehat M_t(\lambda)\). This comparison includes their different
betting rules: \(M_t^{\mathrm{cal}}(\lambda)\) fixes \(\lambda_t\) after
calibration, while \(M_t(\lambda)\) updates it as in
Section~\ref{sec:betting-selection}.

For GREE, we supply the true VaR as the VaR forecast. The
\(M_t(\theta_0,z_0;\lambda)\) column uses the same betting process as \(M_t(\lambda)\),
and ESR is the fixed-sample comparison.

\begin{table}[!htbp]
\centering
\caption{Detection percentages for iid losses (\(N=1000\)).}
\label{tab:confidence-profile-iid}
\papertablestyle
\begin{tabular*}{\linewidth}{@{\extracolsep{\fill}}lcccccc@{}}
\toprule
Distribution & \(m\) & GREE & \tablemethod{M_t^{\mathrm{cal}}}{(\lambda)} & \tablemethod{\widehat M_t}{(\lambda)} & \tablemethod{M_t}{(\theta_0,z_0;\lambda)} & ESR \\
\midrule
\multicolumn{7}{l}{\textit{250 monitoring days}} \\
\addlinespace[2pt]
Normal & $0.9$ & 46/21/10 & 63/31/15 & 49/22/11 & 71/45/30 & 86/43/20 \\
 & $1.0$ & 8/1/0 & 13/2/0 & 7/1/0 & 13/2/0 & 45/6/1 \\
 & $1.1$ & 1/0/0 & 1/0/0 & 0/0/0 & 1/0/0 & 9/0/0 \\
\addlinespace[3pt]
$t_3$ & $0.9$ & 14/4/2 & 19/4/2 & 16/5/2 & 23/8/4 & 55/14/2 \\
 & $1.0$ & 8/2/1 & 6/2/0 & 6/2/1 & 10/3/1 & 38/6/1 \\
 & $1.1$ & 4/1/0 & 3/0/0 & 3/0/0 & 5/1/0 & 24/2/0 \\
\midrule
\multicolumn{7}{l}{\textit{500 monitoring days}} \\
\addlinespace[2pt]
Normal & $0.9$ & 72/48/32 & 81/59/43 & 72/46/32 & 86/70/57 & 95/74/58 \\
 & $1.0$ & 11/2/0 & 19/3/1 & 13/2/0 & 21/4/1 & 47/13/4 \\
 & $1.1$ & 1/0/0 & 1/0/0 & 0/0/0 & 1/0/0 & 3/0/0 \\
\addlinespace[3pt]
$t_3$ & $0.9$ & 20/7/3 & 29/10/4 & 26/9/5 & 34/15/8 & 65/29/9 \\
 & $1.0$ & 10/3/1 & 10/2/1 & 9/3/1 & 14/5/2 & 40/10/2 \\
 & $1.1$ & 4/2/0 & 4/1/0 & 4/1/0 & 6/1/1 & 24/4/1 \\
\midrule
\multicolumn{7}{l}{\textit{1000 monitoring days}} \\
\addlinespace[2pt]
Normal & $0.9$ & 93/80/67 & 92/82/73 & 87/73/60 & 96/90/84 & 99/93/85 \\
 & $1.0$ & 13/2/1 & 24/5/2 & 18/3/1 & 26/6/2 & 46/15/6 \\
 & $1.1$ & 1/0/0 & 1/0/0 & 0/0/0 & 1/0/0 & 1/0/0 \\
\addlinespace[3pt]
$t_3$ & $0.9$ & 30/12/6 & 40/19/10 & 39/17/9 & 47/24/14 & 76/44/21 \\
 & $1.0$ & 12/4/2 & 14/4/1 & 12/4/2 & 19/6/3 & 44/13/4 \\
 & $1.1$ & 5/2/0 & 5/1/0 & 5/1/1 & 8/2/1 & 20/3/1 \\
\bottomrule
\end{tabular*}

\par\vspace{2pt}
\begin{minipage}{\linewidth}
\raggedright
\emph{Notes:} Each path has \(n_c=200\) calibration observations; \(T-n_c\) counts monitoring observations.
The report multipliers \(m=0.9,1.0,1.1\) indicate under-reporting, exact reporting, and over-reporting, respectively.
Sequential entries use thresholds \(2/5/10\); ESR entries give fixed-sample
rejection frequencies at levels \(1/2,1/5,1/10\).
\end{minipage}
\end{table}

As expected, Table~\ref{tab:confidence-profile-iid} shows more frequent
detection under ES under-reporting than under exact or conservative reporting.
Both ES-only procedures detect under-reporting more often as the monitoring
period lengthens, while their rejection percentages under exact and
conservative forecasts remain at most \(5\%\) at threshold \(10\).
Under-reporting is substantially easier to detect for Normal losses than
for \(t_3\) losses, a pattern shared by the benchmark methods.

At threshold \(5\), \(M_t^{\mathrm{cal}}(\lambda)\), which uses the iid
assumption, detects under-reporting more often than
\(\widehat M_t(\lambda)\) for Normal losses at all three monitoring
horizons. For \(t_3\) losses, the two procedures give similar detection
percentages, with no consistent ordering across horizons. The advantage
for Normal losses may reflect the use of calibration observations to narrow
the range of possible VaR values, although the two procedures also use
different betting rules. Evaluating
\(M_t(\theta_0,z_0;\lambda)\) at the generating parameter--VaR pair with
the same \((\lambda_t)\) also gives higher detection percentages at this
threshold, especially for Normal losses. The latter comparison illustrates
the cost of taking the minimum over the feasible parameter--VaR pairs:
evidence must accumulate against all of them before the minimum reaches
the threshold.

The difference between the two loss distributions is more pronounced than
the difference between the two ES-only procedures: with \(t_3\) losses,
evidence against the same proportional ES shortfall accumulates more
slowly. Thus ES forecasts can be monitored without VaR forecasts,
but heavier tails may require longer monitoring to reveal persistent
under-reporting.

\subsection{AR-GARCH comparisons}

We generate losses from normalized AR(1), GARCH(1,1), and
AR(1)-GARCH(1,1) models with fixed initial values and iid innovations from
the standard normal or standardized \(t_3\) distribution. Both innovation distributions \(F\)
have mean zero and variance one. At the generating parameters, the
standardized residuals therefore equal the iid innovations, so their
conditional VaR is the constant \(\VaR_p(F)\), as required by
\eqref{eq:common-conditional-var}. The generating parameter vectors
\(\theta_0=(\phi_0,\alpha_0,\beta_0,v_0)\) and the corresponding
conditional means and variances are:
\[
\begin{alignedat}{1}
\text{AR:}\quad &\theta_0=(0.30,0,0,1),\quad
\mu_t=0.30L_{t-1},\quad \sigma_t^2=1,\\[3pt]
\text{GARCH:}\quad &\theta_0=(0,0.08,0.90,1),\quad
\mu_t=0,\quad \sigma_t^2=0.02+0.08u_{t-1}^2+0.90\sigma_{t-1}^2,\\[3pt]
\text{AR-GARCH:}\quad &\theta_0=(0.10,0.08,0.90,1),\quad
\mu_t=0.10L_{t-1},\quad \sigma_t^2=0.02+0.08u_{t-1}^2+0.90\sigma_{t-1}^2.
\end{alignedat}
\]
At each date, we take the positive square root \(\sigma_t\), draw the
innovation \(Z_t\sim F\), and compute \(u_t=\sigma_tZ_t\) and
\(L_t=\mu_t+u_t\).

For each model and innovation distribution, we generate \(N=1000\) paths
starting from \(L_0=u_0=0\) and \(\sigma_0^2=v_0=1\), without burn-in.
Each path contains \(n_c=200\) calibration observations followed by
500 monitoring observations. The recursion continues through calibration
and monitoring without resetting the state. The residual calculation uses
the initialization in Section~\ref{sec:residual-quantile-null} with \(N_0=0\).

The true conditional ES is
\(\mu_t(\theta_0)+\sigma_t(\theta_0)\ES_p(F)\), where \(\theta_0\) is the
generating parameter with \(v=v_0\). We set the ES forecast to
\(r_t=\mu_t(\theta_0)+\sigma_t(\theta_0)m\ES_p(F)\), \(m\in\{0.9,1.0,1.1\}\).
The values \(m=0.9,1.0,1.1\) give under-reporting, exact reporting, and
over-reporting, respectively. They scale only the innovation-tail
contribution, leaving \(\mu_t(\theta_0)\) unchanged.

Table~\ref{tab:ts-var-es} reports detection percentages at thresholds
\(2\), \(5\), and \(10\). We select \((\lambda_t)\) as in
Section~\ref{sec:betting-selection}, and report the numerical approximation
\(\widehat M_t(\lambda)\) as in
Section~\ref{sec:profile-computation}.
The capital calculation uses \(v\in\{0,0.1,\ldots,10^4\}\), as in
Section~\ref{sec:financial-data}, with the betting process unchanged.

\begin{table}[!htbp]
\centering
\caption{Detection percentages for time-series losses (\(N=1000\), \(T-n_c=500\)).}
\label{tab:ts-var-es}
\papertablestyle
\noindent\begin{minipage}{\linewidth}
\centering
\begin{tabular*}{\linewidth}{@{\extracolsep{\fill}}lccccccc@{}}
\toprule
\multicolumn{8}{l}{\textit{Panel A: Normal innovations}} \\
\addlinespace[2pt]
Model & \(m\) & \tablemethod{\widehat M_t}{(\lambda)} & \tablemethod{M_t}{(\theta_0,z_0;\lambda)} & GREE & GREL & GREM & ESR \\
\midrule
AR & $0.9$ & 70/44/28 & 86/70/56 & 71/44/30 & 75/46/28 & 74/48/31 & 92/62/36 \\
 & $1.0$ & 11/2/1 & 21/4/1 & 9/2/1 & 19/3/1 & 12/2/1 & 40/6/1 \\
 & $1.1$ & 1/0/0 & 2/0/0 & 2/0/0 & 4/0/0 & 2/0/0 & 3/0/0 \\
\addlinespace
GARCH & $0.9$ & 70/44/28 & 87/72/56 & 73/48/29 & 72/37/18 & 75/45/26 & 93/52/29 \\
 & $1.0$ & 17/2/0 & 27/5/1 & 11/2/0 & 16/2/1 & 13/2/0 & 37/5/0 \\
 & $1.1$ & 1/0/0 & 2/0/0 & 2/0/0 & 4/0/0 & 2/0/0 & 2/0/0 \\
\addlinespace
AR-GARCH & $0.9$ & 69/42/28 & 86/68/54 & 71/46/31 & 68/35/18 & 74/44/26 & 91/54/31 \\
 & $1.0$ & 15/2/1 & 24/5/2 & 11/2/0 & 16/2/1 & 12/2/1 & 37/6/2 \\
 & $1.1$ & 1/0/0 & 2/0/0 & 2/0/0 & 4/0/0 & 2/0/0 & 3/0/0 \\
\midrule
\multicolumn{8}{l}{\textit{Panel B: Standardized \(t_3\) innovations}} \\
\addlinespace[2pt]
Model & \(m\) & \tablemethod{\widehat M_t}{(\lambda)} & \tablemethod{M_t}{(\theta_0,z_0;\lambda)} & GREE & GREL & GREM & ESR \\
\midrule
AR & $0.9$ & 27/9/4 & 37/16/8 & 18/5/2 & 24/7/2 & 22/5/2 & 64/24/8 \\
 & $1.0$ & 11/2/0 & 18/3/1 & 6/1/0 & 11/2/1 & 8/1/0 & 42/8/2 \\
 & $1.1$ & 4/0/0 & 7/1/0 & 3/0/0 & 6/1/0 & 5/0/0 & 23/2/0 \\
\addlinespace
GARCH & $0.9$ & 28/10/4 & 41/16/9 & 20/7/2 & 24/6/2 & 22/6/2 & 56/16/4 \\
 & $1.0$ & 12/2/1 & 19/5/2 & 9/2/0 & 11/2/1 & 10/2/0 & 33/5/1 \\
 & $1.1$ & 4/0/0 & 8/1/0 & 5/1/0 & 6/1/0 & 5/1/0 & 18/1/0 \\
\addlinespace
AR-GARCH & $0.9$ & 28/12/6 & 41/18/11 & 20/7/2 & 26/8/3 & 24/7/3 & 57/16/4 \\
 & $1.0$ & 14/4/2 & 20/6/3 & 9/2/0 & 14/3/1 & 12/2/0 & 34/4/1 \\
 & $1.1$ & 6/1/0 & 10/2/1 & 5/1/0 & 7/1/0 & 6/1/0 & 17/1/0 \\
\bottomrule
\end{tabular*}
\end{minipage}

\par\vspace{2pt}
\begin{minipage}{\linewidth}
\raggedright
\emph{Notes:} AR, GARCH, and AR-GARCH denote AR(1), GARCH(1,1), and AR(1)-GARCH(1,1), respectively.
The report multipliers \(m=0.9,1.0,1.1\) indicate under-reporting, exact reporting, and over-reporting, respectively.
Sequential entries use thresholds \(2/5/10\); ESR entries give fixed-sample
rejection frequencies at levels \(1/2,1/5,1/10\).
\end{minipage}
\end{table}

Table~\ref{tab:ts-var-es} shows that the main iid findings persist under
serial dependence and changing conditional volatility. As expected,
\(\widehat M_t(\lambda)\) detects under-reporting more often than exact
or conservative reporting; its rejection percentages in the latter two
cases are at most \(2\%\) at threshold \(10\).
Detection is again more frequent with Normal than with \(t_3\) innovations.
For our procedure, this distributional difference is larger than the
variation across the AR, GARCH, and AR-GARCH specifications, consistent
with the slower accumulation of evidence for \(t_3\) discussed above.

The comparison with the VaR--ES methods depends on the innovation
distribution. At threshold \(5\), our ES-only procedure and GREE have similar
detection percentages in the AR setting with Normal innovations, whereas
GREE detects more often in the GARCH and AR-GARCH settings. With \(t_3\) innovations, our procedure
detects more often than GREE in all three settings. GREL performs well for
AR losses but falls behind both our procedure and GREE in the GARCH setting
under either innovation distribution. Since GREE and GREL receive the same
VaR--ES forecasts, their different performance also shows the importance of
how past observations are used to choose \(\lambda_t\).

Using the generating parameter--VaR pair with the same \((\lambda_t)\)
raises detection at threshold \(5\) in all six settings, showing the cost
of leaving these quantities unknown. Nevertheless, ES forecasts alone can
provide useful warnings under serial dependence and changing conditional
volatility.
For under-reporting, ESR rejects more often than our procedure at the
fixed-sample levels \(1/2\) and \(1/5\), corresponding to thresholds
\(2\) and \(5\). The sequential procedures can issue a warning before the
endpoint.

Appendix~\ref{app:gamed} extends this design to ES forecasts that change from
over-reporting to under-reporting while the loss-generating process remains
unchanged, comparing \(\widehat M_t(\lambda)\) with rolling-window
benchmarks.

\section{Financial data analysis}
\label{sec:financial-data}

We use daily NASDAQ Composite closing levels \(P_t\) from January 16, 1996 through
December 31, 2021\footnote{Data are available from FRED (series NASDAQCOM).}
and define the daily loss as \(L_t=-100\log(P_t/P_{t-1})\).
Following \citet[Table~6]{wang-wang-ziegel-2026}, we obtain one-step VaR
and ES forecasts by fitting an AR(1)-GARCH(1,1) model to the most recent
500 losses at each forecast date. We consider Normal, Student \(t\),
skewed \(t\), and empirical forecasts at level \(p=0.975\).
The empirical forecasts use the distribution of losses in the preceding
500-day window. A fifth specification, denoted ES \(+10\%\), multiplies the
skewed-\(t\) ES forecast by \(1.1\), while keeping its VaR forecast unchanged.

The common monitoring period is January 3, 2005 through December 31, 2021.
Before monitoring begins, the backtests use 900 trading days of history,
divided into blocks of 400, 300, and 200 observations.
For our ES-only procedure, the first two blocks, totaling \(700=400+300\)
observations, are used only to run the recursions for \(\mu_t(\theta)\) and
\(\sigma_t(\theta)\) at each candidate \(\theta\). The final 200 observations
form the calibration period used to choose the betting process
\((\lambda_t)\). The recursions continue through calibration and monitoring.
We also use the loss immediately before these 900 observations to compute
the first \(\mu_t(\theta)\). With \(\beta\) as large as \(0.9995\), the
effect of the initial variance \(v\) can persist through the 700 preceding
observations and the 200 calibration observations. We therefore retain
\(v\in[0,V]\) when choosing \((\lambda_t)\) and taking the infimum.

The VaR--ES benchmarks use the last two blocks, totaling \(500=300+200\)
observations, as their initial betting history. These observations contain
past losses and the VaR and ES forecasts. Thus the benchmarks'
500-day history includes all 200 days of our calibration period and the
300 days immediately before it. Once monitoring begins, their betting
history rolls forward with a window of 500 observations.
The e-process \(M_t(\lambda)\) uses the losses and ES forecasts, without the
VaR forecasts. For the capital calculation, we use the simulation grid
and continuous VaR procedure, with the \(v\)-grid enlarged to
\(\{0,0.1,\ldots,10^4\}\). The betting process is unchanged.

Table~\ref{tab:nasdaq-monitoring}
reports the corresponding detection dates, in trading days from the start of
monitoring, together with the VaR--ES benchmarks.
These comparisons concern the complete procedures, which differ in their
betting processes and initial histories.
For our procedure, interpreting these detections as evidence of ES
under-reporting requires the conditional VaR structure and report margin
condition in Section~\ref{sec:residual-quantile-null}.

\begin{table}[!htbp]
\centering
\caption{Detection dates for NASDAQ Composite forecasts.}
\label{tab:nasdaq-monitoring}
\papertablestyle
\begin{tabular*}{\linewidth}{@{\extracolsep{\fill}}lcccc@{}}
\toprule
Forecast distribution & \tablemethod{\widehat M_t}{(\lambda)} & GREE & GREL & GREM \\
\midrule
Normal & 342/358/363 & 541/645/714 & 354/363/480 & 358/480/541 \\
\(t\) & 342/358/363 & 541/718/942 & 354/363/541 & 358/480/541 \\
Skewed \(t\) & 354/363/541 & 718/3174/3478 & 363/541/1345 & 480/541/2590 \\
Skewed \(t\), ES \(+10\%\) & 541/--/-- & --/--/-- & --/--/-- & --/--/-- \\
Empirical & 714/720/764 & 741/778/925 & 948/3827/-- & 757/877/934 \\
\bottomrule
\end{tabular*}

\par\vspace{2pt}
\begin{minipage}{\linewidth}
\raggedright
\emph{Notes:} Entries give trading days from the start of monitoring for thresholds \(c=2/5/10\), respectively. ``--'' indicates no detection during monitoring.
\end{minipage}
\end{table}

For Normal forecasts, \(\widehat M_t(\lambda)\) reaches threshold \(10\)
on day 363; GREE, GREL, and GREM reach it on days
714, 480, and 541, respectively.
For Student \(t\) forecasts, \(\widehat M_t(\lambda)\) reaches threshold \(10\)
on day 363; GREE, GREL, and GREM reach it on days
942, 541, and 541, respectively.

With the skewed-\(t\) ES forecast increased by \(10\%\),
\(\widehat M_t(\lambda)\) reaches threshold \(2\) on day 541,
but does not reach thresholds \(5\) or \(10\) during monitoring.
GREE, GREL, and GREM do not reach any of the three thresholds for the increased reports.

Figure~\ref{fig:nasdaq-evidence-paths} in Appendix~\ref{app:nasdaq-log-capital}
shows the corresponding log capital paths. The thin gray horizontal lines mark \(\log 2\), \(\log 5\), and
\(\log 10\).

\section{Conclusion}
\label{sec:conclusion}

We develop e-backtesting methods that allow ES forecasts to be monitored
without a VaR forecast. The proposed methods use the sizes of
observed tail losses to accumulate evidence of ES under-reporting, while
controlling false rejections at any stopping time under the stated
assumptions. Profiling accounts for
the unreported VaR, and our theoretical comparisons quantify its cost in
log wealth and maximal expected log growth.

Simulations using the proposed betting process and numerical approximation
show that the methods can detect ES under-reporting. The comparisons also
show how the loss distribution and betting process affect detection. The
NASDAQ application further illustrates their usefulness for monitoring
daily ES forecasts.

Future work could study betting processes and numerical approximations
for other conditional VaR families.

\begingroup
\setlength{\bibsep}{0pt}

\endgroup

\clearpage
\appendix
\pdfbookmark[0]{Appendix}{appendices}
\section*{Appendices}

\section{Sensitivity to grid refinement}
\label{app:grid-sensitivity}

We examine how refining the capital grid affects the numerical
approximation \(\widehat M_t(\lambda)\) in the iid, AR, GARCH, and AR-GARCH
simulations of Section~\ref{sec:sim}. We halve and double the subdivision
counts in \((\phi,\alpha,\beta)\), giving three nested grids with maximum
spacings \(0.02\), \(0.01\), and \(0.005\). All other numerical settings
are as in the main simulations. Each calculation uses the same loss paths,
ES forecasts, and betting process \((\lambda_t)\).

Table~\ref{tab:grid-sensitivity} reports lost detections: paths that reach
a detection threshold on the coarser grid but not on the finer grid by the
same monitoring horizon.
We retain the path sets selected in preliminary calculations.
The \(0.02\)-to-\(0.01\) comparison uses the simulation paths retained
for continuous VaR profiling. For the \(0.01\)-to-\(0.005\) comparison,
we first selected at most 50 paths for each combination of model,
innovation distribution, and report multiplier \(m\), giving priority to
detected paths close to a threshold. We then included additional detected
paths whose maximum log capital up to a monitoring horizon exceeded
\(\log c\) by less than either the largest decrease observed for the same
model, innovation distribution, and \(m\) in this initial comparison or
the decrease on that path in the \(0.02\)-to-\(0.01\) comparison.
These path sets are held fixed, and each grid uses the betting process in
Section~\ref{sec:betting-selection}. Each entry gives the number of lost
detections among the paths compared as a percentage of the original
\(N=1000\) paths.

Among the paths compared, detection is less frequent on the finer grids,
but the differences in Table~\ref{tab:grid-sensitivity} are small.
For each path, we also compute the largest decrease in
\(\log\widehat M_t(\lambda)\) over the monitoring period.
The median and mean of these decreases are \(0.047\) and \(0.069\) for
the \(0.02\)-to-\(0.01\) comparison, and \(0.028\) and \(0.046\) for
the \(0.01\)-to-\(0.005\) comparison.

\begin{table}[p]
\centering
\caption{Lost detections under capital-grid refinement (\%).}
\label{tab:grid-sensitivity}
\papertablestyle
\setlength{\tabcolsep}{4pt}
\renewcommand{\arraystretch}{1.0}
\begin{tabular*}{\linewidth}{@{\extracolsep{\fill}}llrrcc@{}}
\toprule
 & & & & \multicolumn{1}{c}{\(0.02\to0.01\)} & \multicolumn{1}{c}{\(0.01\to0.005\)} \\
\cmidrule(lr){5-5}\cmidrule(l){6-6}
Model & Distribution & \(H\) & \(m\) & Lost (\%) & Lost (\%) \\
\midrule
iid & Normal & 250 & 0.9 & 0.6/0.4/0.2 & 0.5/0.3/0.1 \\
 &  &  & 1.0 & 0.0/0.0/0.0 & 0.2/0.0/0.0 \\
 &  &  & 1.1 & 0.0/0.0/0.0 & 0.0/0.0/0.0 \\
\addlinespace[2pt]
 &  & 500 & 0.9 & 0.7/0.2/0.6 & 0.2/0.6/0.3 \\
 &  &  & 1.0 & 0.8/0.0/0.0 & 0.3/0.0/0.0 \\
 &  &  & 1.1 & 0.0/0.0/0.0 & 0.0/0.0/0.0 \\
\addlinespace[2pt]
 &  & 1000 & 0.9 & 0.2/0.2/0.6 & 0.1/0.2/0.1 \\
 &  &  & 1.0 & 0.3/0.2/0.0 & 0.3/0.0/0.0 \\
 &  &  & 1.1 & 0.0/0.0/0.0 & 0.0/0.0/0.0 \\
\addlinespace[2pt]
 & \(t_3\) & 250 & 0.9 & 0.2/0.2/0.0 & 0.1/0.2/0.0 \\
 &  &  & 1.0 & 0.1/0.2/0.0 & 0.0/0.0/0.0 \\
 &  &  & 1.1 & 0.0/0.0/0.0 & 0.0/0.0/0.0 \\
\addlinespace[2pt]
 &  & 500 & 0.9 & 0.6/0.3/0.0 & 0.4/0.1/0.0 \\
 &  &  & 1.0 & 0.1/0.0/0.0 & 0.0/0.0/0.0 \\
 &  &  & 1.1 & 0.0/0.1/0.0 & 0.0/0.0/0.0 \\
\addlinespace[2pt]
 &  & 1000 & 0.9 & 0.3/0.6/0.2 & 0.4/0.1/0.1 \\
 &  &  & 1.0 & 0.2/0.1/0.0 & 0.0/0.0/0.0 \\
 &  &  & 1.1 & 0.0/0.1/0.0 & 0.0/0.0/0.0 \\
\addlinespace[3pt]
AR & Normal & 500 & 0.9 & 0.4/0.3/0.3 & 0.0/0.9/0.4 \\
 &  &  & 1.0 & 0.3/0.1/0.0 & 0.0/0.1/0.1 \\
 &  &  & 1.1 & 0.1/0.0/0.0 & 0.1/0.0/0.0 \\
\addlinespace[2pt]
 & \(t_3\) & 500 & 0.9 & 0.1/0.1/0.1 & 0.2/0.1/0.1 \\
 &  &  & 1.0 & 0.2/0.2/0.1 & 0.5/0.0/0.0 \\
 &  &  & 1.1 & 0.1/0.0/0.0 & 0.0/0.0/0.0 \\
\addlinespace[3pt]
GARCH & Normal & 500 & 0.9 & 0.3/1.0/0.7 & 0.3/0.3/0.2 \\
 &  &  & 1.0 & 0.1/0.0/0.0 & 0.1/0.1/0.0 \\
 &  &  & 1.1 & 0.0/0.0/0.0 & 0.1/0.0/0.0 \\
\addlinespace[2pt]
 & \(t_3\) & 500 & 0.9 & 0.0/0.0/0.1 & 0.1/0.1/0.0 \\
 &  &  & 1.0 & 0.2/0.1/0.0 & 0.1/0.1/0.0 \\
 &  &  & 1.1 & 0.1/0.1/0.0 & 0.0/0.0/0.0 \\
\addlinespace[3pt]
AR-GARCH & Normal & 500 & 0.9 & 0.4/0.5/0.3 & 0.6/0.2/0.1 \\
 &  &  & 1.0 & 0.0/0.0/0.0 & 0.1/0.0/0.0 \\
 &  &  & 1.1 & 0.0/0.0/0.0 & 0.0/0.0/0.0 \\
\addlinespace[2pt]
 & \(t_3\) & 500 & 0.9 & 0.2/0.2/0.1 & 0.3/0.0/0.0 \\
 &  &  & 1.0 & 0.1/0.0/0.1 & 0.0/0.0/0.0 \\
 &  &  & 1.1 & 0.0/0.0/0.0 & 0.0/0.0/0.0 \\
\bottomrule
\end{tabular*}

\par\vspace{2pt}
\begin{minipage}{\linewidth}
\raggedright
\emph{Notes:} \(H=T-n_c\) is the monitoring horizon. Lost (\%) gives lost
detections at thresholds \(2/5/10\) as percentages of the original 1000
paths. Only selected paths are compared. The same iid paths are used at all
three horizons.
\end{minipage}
\end{table}

\clearpage
\section{The costs of parameter and VaR profiling}
\label{app:profiling-comparison}

Table~\ref{tab:profiling-comparison} separates the costs of profiling the
parameter vector and residual VaR. All three columns use the same loss
paths, ES forecasts, and betting process from Section~\ref{sec:sim}.
The middle column fixes the generating parameter vector \(\theta_0\),
including \(v=v_0\), and profiles only the residual VaR. We denote its
numerical approximation by \(\widehat M_t(\theta_0;\lambda)\), using the
continuous VaR calculation and detection convention in
Section~\ref{sec:numerical-implementation}.

\begin{table}[!htbp]
\centering
\caption{Detection percentages under parameter and VaR profiling
(\(N=1000\), \(m=0.9\), \(T-n_c=500\)).}
\label{tab:profiling-comparison}
\papertablestyle
\begin{tabular*}{\linewidth}{@{\extracolsep{\fill}}lccc@{}}
\toprule
Model & \shortstack{Both unknown\\\(\widehat M_t(\lambda)\)}
& \shortstack{\(\theta_0\) known\\\(\widehat M_t(\theta_0;\lambda)\)}
& \shortstack{Both known\\\(M_t(\theta_0,z_0;\lambda)\)} \\
\midrule
\multicolumn{4}{l}{\textit{Panel A: Normal innovations}} \\
\addlinespace[2pt]
iid & 71.8/46.3/31.8 & 77.3/53.1/37.3 & 86.4/70.3/56.7 \\
AR & 69.5/44.5/28.5 & 76.3/51.3/34.8 & 86.1/69.6/55.6 \\
GARCH & 69.7/43.9/28.0 & 78.3/54.3/36.8 & 87.4/71.6/55.7 \\
AR-GARCH & 69.2/41.6/27.7 & 77.1/49.7/34.9 & 86.4/68.5/53.8 \\
\midrule
\multicolumn{4}{l}{\textit{Panel B: Standardized \(t_3\) innovations}} \\
\addlinespace[2pt]
iid & 26.1/9.3/4.9 & 31.1/13.2/6.3 & 34.3/15.0/7.6 \\
AR & 27.4/9.4/3.9 & 34.7/14.0/5.4 & 37.2/16.3/7.5 \\
GARCH & 27.9/10.3/4.4 & 36.9/13.6/7.4 & 40.6/16.2/9.0 \\
AR-GARCH & 28.5/12.1/6.3 & 37.2/15.5/8.5 & 41.1/17.7/10.6 \\
\bottomrule
\end{tabular*}

\par\vspace{2pt}
\begin{minipage}{\linewidth}
\raggedright
\emph{Notes:} Entries correspond to thresholds \(2/5/10\).
The iid rows apply the time-series construction to iid losses.
\end{minipage}
\end{table}

Fixing \(\theta_0\) raises detection percentages in every setting, and
fixing the residual VaR as well raises them further. At threshold \(10\),
both gaps are smaller for \(t_3\) innovations than for Normal innovations.
For Normal innovations, the remaining VaR-profiling cost exceeds the
parameter-profiling cost at this threshold.

\clearpage
\section{Proofs}
\label{app:proofs}

\phantomsection\label{proof:thm-iid-construction}
\begin{proof}[Proof of Proposition~\ref{thm:iid-construction}]
The inequalities \(z_0=\VaR_p(F)\le\ES_p(F)\le r\) give \(z_0\le r\).
Since \(F(z_0)\ge p\), the coverage failure \(\{z_\delta>z_0\}\) satisfies
\[
\Pp_F(z_\delta>z_0)
=
\Pp_F\{L_{(j_\delta)}>z_0\}
\le
\Pp\{\operatorname{Bin}(n_c,p)<j_\delta\}
\le
\delta.
\]
The product \(M_t(z_0;\lambda)\) from Section~\ref{sec:confidence-profile} satisfies
\(M_{n_c}(z_0;\lambda)=1\).  At a stopping time \(n_c\le\tau\le T\), the numerator
\((1+\Gamma\delta)M_\tau^{\mathrm{cal}}(\lambda)\) in \eqref{eq:calibrated-process} equals \(\min\{\inf_{z_\delta\le z\le r}M_\tau(z;\lambda),\Gamma\}\) on \(\{z_\delta\le r\}\) and \(\Gamma\) otherwise.  Then
\[
(1+\Gamma\delta)M_\tau^{\mathrm{cal}}(\lambda)
\le
\min\{M_\tau(z_0;\lambda),\Gamma\}
+
\Gamma\1_{\{z_\delta>z_0\}}.
\]
Indeed, on \(\{z_\delta\le z_0\}\) we have \(z_\delta\le z_0\le r\), so the
construction is
in its first case with \(z_0\in[z_\delta,r]\); hence \(\inf_{z_\delta\le z\le r}M_\tau(z;\lambda)\le M_\tau(z_0;\lambda)\) and
\((1+\Gamma\delta)M_\tau^{\mathrm{cal}}(\lambda)\le\min\{M_\tau(z_0;\lambda),\Gamma\}\).  On \(\{z_\delta>z_0\}\), the left-hand side is at
most \(\Gamma\) in either case, whereas the right-hand side is at least
\(\Gamma\).

If \(z_0<r\), the Rockafellar--Uryasev representation and
\(\ES_p(F)\le r\) give
\(\E_F[e_p^{\mathrm{ES}}(L_s,r,z_0)\mid \mathcal F_{s-1}]\le1\).
If \(z_0=r\), then \(\ES_p(F)=r\) and
\(\E_F[(L_s-r)_+\mid\mathcal F_{s-1}]=0\); hence \(L_s\le r\) almost surely
and the e-statistic at \(z_0=r\) equals one.  Thus the same conditional inequality
holds in both cases.
Since \(\lambda_s\) is predictable, \(M_t(z_0;\lambda)\) is a nonnegative supermartingale
with \(M_{n_c}(z_0;\lambda)=1\).  Thus \(\E_F[M_\tau(z_0;\lambda)]\le1\), and
\(\E_F[\min\{M_\tau(z_0;\lambda),\Gamma\}]\le1\).

Taking expectations in the bound above yields
\[
(1+\Gamma\delta)\,\E_F[M_\tau^{\mathrm{cal}}(\lambda)]
\le
\E_F[\min\{M_\tau(z_0;\lambda),\Gamma\}]
+\Gamma\Pp_F(z_\delta>z_0)
\le
1+\Gamma\delta.
\]
Dividing by \(1+\Gamma\delta\) yields
\(\E_F[M_\tau^{\mathrm{cal}}(\lambda)]\le1\).
The second claim follows from the first by Markov's inequality at the
first time \(M_t^{\mathrm{cal}}(\lambda)\) reaches \(c\).
\end{proof}

\phantomsection\label{proof:thm-full-parameter-profile-validity}
\begin{proof}[Proof of Proposition~\ref{thm:full-parameter-profile-validity}]
We first prove measurability.  Fix \(t>n_c\).  The endpoint
\(\bar z_t(\theta)\) is measurable for each fixed \(\theta\) and continuous
in \(\theta\) for each fixed path.  These properties also hold for every standardized ES forecast
\(\tilde r_s(\theta)\).  Parametrize the feasible values by
\(z=\bar z_t(\theta)-a\), \(a\ge0\).  The map
\((\theta,a)\longmapsto M_t\bigl(\theta,\bar z_t(\theta)-a;\lambda\bigr)\)
is continuous for each path, since every denominator is at least
\((1-p)\varepsilon>0\), and is \(\mathcal F_t\)-measurable at every fixed
\((\theta,a)\).  Let \(D\) be a countable dense subset of
\(\Theta\times[0,\infty)\).  Continuity gives
\(M_t(\lambda)=\inf_{(\theta,a)\in D}M_t\bigl(\theta,\bar z_t(\theta)-a;\lambda\bigr)\),
which is a countable infimum of measurable functions.  Neither compactness
nor attainment of the infimum is needed.

We next prove validity.  Under \(H_0\), choose one fixed pair
\((\theta_0,z_0)\) satisfying \eqref{eq:common-conditional-var}
and the parameter coverage and report margin conditions.
For \(n_c<s\le T\), \(z_0\) is the conditional \(p\)-quantile of
\(Z_s(\theta_0)\).
Apply \eqref{eq:ru-lower-quantile} to this conditional distribution.  The finite
reported bound under \(H_0\) gives, almost surely,
\[
\E[(Z_s(\theta_0)-z_0)_+\mid\mathcal F_{s-1}]
=(1-p)\{\ES_p(Z_s(\theta_0)\mid\mathcal F_{s-1})-z_0\}<\infty.
\]
The denominator \(\tilde r_s(\theta_0)-z_0\) is
\(\mathcal F_{s-1}\)-measurable and at least \(\varepsilon>0\).  Hence
\[
\E[X_s(\theta_0,z_0)\mid\mathcal F_{s-1}]
=\frac{\ES_p(Z_s(\theta_0)\mid\mathcal F_{s-1})-z_0}{\tilde r_s(\theta_0)-z_0}\le1.
\]
Since \(\lambda_s\) is predictable and belongs to \([0,1-p]\), the product
\(M_t(\theta_0,z_0;\lambda)\) is a nonnegative supermartingale starting
at one.  This requires no common full conditional distribution or independence.
The pair \((\theta_0,z_0)\) remains feasible at every time \(t\), so
\(0\le M_t(\lambda)\le M_t(\theta_0,z_0;\lambda)\), \(n_c\le t\le T\).
Optional stopping yields
\(\E[M_\tau(\lambda)]\le
\E[M_\tau(\theta_0,z_0;\lambda)]\le1\).
Finally, the event that \(M_t(\lambda)\) reaches or exceeds the threshold
is contained in the corresponding event for this supermartingale, and
Ville's inequality gives the stated bound.
\end{proof}

\phantomsection\label{proof:prop-profiling-cost-decomposition}
\begin{proof}[Proof of Proposition~\ref{prop:profiling-cost-decomposition}]
At \(t=n_c\), the bound is an equality since all products equal one and
the sum is empty. Fix a time \(t>n_c\). Throughout, write
\(f_s(z)=1-\lambda_s+\lambda_sX_s(\theta_0,z)\) and
\(\bar z_t=\min_{n_c<s\le t}\tilde r_s(\theta_0)-\varepsilon\), so that
the feasible values of \(z\) at time \(t\) are \(z\le\bar z_t\), and the
report margin condition ensures that \(z_0\) is feasible at every \(t\).

Fix \(n_c<s\le t\), write \(Z_s=Z_s(\theta_0)\) and
\(\tilde r_s=\tilde r_s(\theta_0)\), and define
\(f_s^-:=\inf_{z\le\bar z_t}f_s(z)\).
For \(z<\tilde r_s\), the ratio
\((Z_s-z)/(\tilde r_s-z)\) has derivative
\((Z_s-\tilde r_s)/(\tilde r_s-z)^2\) in \(z\) and tends to one as
\(z\to-\infty\).  If \(Z_s\le\tilde r_s\), the factor
\(f_s(z)=1-\lambda_s+\lambda_s(Z_s-z)_+/\{(1-p)(\tilde r_s-z)\}\) is
therefore nonincreasing on \((-\infty,\bar z_t]\), and its infimum is
\(f_s(\bar z_t)=f_s^-\).  If \(Z_s>\tilde r_s\), the factor is nondecreasing
and tends to \(1-\lambda_s+\lambda_s/(1-p)=f_s^-\) as \(z\to-\infty\).  In
both cases, \(f_s(z)\ge f_s^-\ge1-\lambda_s\) for every feasible \(z\).
Hence the product at every feasible \(z\) satisfies
\(M_t(\theta_0,z;\lambda)\ge\prod_{s=n_c+1}^tf_s^-\), and taking the
infimum over \(z\) gives \(M_t(\theta_0;\lambda)\ge\prod_{s=n_c+1}^tf_s^-\).
Dividing \(M_t(\theta_0,z_0;\lambda)=\prod_{s=n_c+1}^tf_s(z_0)\) by this
product bounds the left-hand side by
\(\sum_{s=n_c+1}^t\log\{f_s(z_0)/f_s^-\}\).  If \(Z_s\le z_0\), then
\(Z_s\le z_0\le\bar z_t\) by the report margin condition, so
\(X_s(\theta_0,z_0)=X_s(\theta_0,\bar z_t)=0\) and
\(f_s(z_0)=f_s^-=1-\lambda_s\); these terms vanish. Hence
\[
\log\frac{M_t(\theta_0,z_0;\lambda)}{M_t(\theta_0;\lambda)}
\le
\sum_{s=n_c+1}^{t}
\1\{Z_s(\theta_0)>z_0\}
\log\frac{f_s(z_0)}{f_s^-}.
\]
The claimed bound now follows from \(f_s^-\ge1-\lambda_s\).
\end{proof}

\phantomsection\label{proof:prop-profiling-epower}
\begin{proof}[Proof of Proposition~\ref{prop:profiling-epower}]
For each feasible \(z\), the integrand in \(g_F(z;u)\) is bounded below by
\(\log p\) and above by \(u e_p^{\mathrm{ES}}(x,\tilde r,z)\).
Since \(\ES_p(F)<\infty\) implies \(\int x_+\,\mathrm dF(x)<\infty\),
dominated convergence shows that \(g_F(z;\cdot)\) is finite and continuous
on \([0,1-p]\), and hence attains its maximum.
The Rockafellar--Uryasev representation gives
\[
\int e_p^{\mathrm{ES}}(x,\tilde r,z_0)\,\mathrm dF(x)
=\frac{\ES_p(F)-z_0}{\tilde r-z_0}>1.
\]
Differentiation at zero, again justified by dominated convergence, yields
\[
g_F'(z_0;0)=\int e_p^{\mathrm{ES}}(x,\tilde r,z_0)\,\mathrm dF(x)-1>0.
\]
Thus \(g^\star>0\), and \(\gamma^\star\le g^\star\) because \(z_0\) is feasible.

For each \(z\le\tilde r-\varepsilon\), the constant standardized ES forecast gives
\(X_s(\theta_0,z)=e_p^{\mathrm{ES}}(Z_s(\theta_0),\tilde r,z)\), \(n_c<s\le T\).
Since \(\lambda_s\) is \(\mathcal F_{s-1}\)-measurable,
\eqref{eq:fixed-innovation-growth} gives
\[
\E\!\left[
\log\{1-\lambda_s+\lambda_s X_s(\theta_0,z)\}
\mid\mathcal F_{s-1}\right]
=g_F(z;\lambda_s)
\le\max_{0\le u\le1-p}g_F(z;u).
\]
Summing over the monitoring dates gives
\(\E[\log M_T(\theta_0,z;\lambda)] \le (T-n_c)\max_{0\le u\le1-p}g_F(z;u)\).
Since \(\theta_0\in\Theta\) and the report is constant at \(\theta_0\),
every such \(z\) remains feasible throughout monitoring, so
\(M_T(\lambda)\le M_T(\theta_0,z;\lambda)\).
Taking the infimum over \(z\) gives the bound \((T-n_c)\gamma^\star\).
At \(z=z_0\), the bound \((T-n_c)g^\star\) is attained by
\(\lambda_s=u^\star\). Dividing by \(T-n_c\) completes the proof.
\end{proof}

\phantomsection\label{proof:prop-profile-asymptotic-power}
\begin{proof}[Proof of Proposition~\ref{prop:profile-asymptotic-power}]
Write \(q=1-p\), let \(Z\sim F\), and put \(A=[0,\varepsilon^{-1}]\).
For \(a\in A\), define
\[
e_a(x)=\frac{\{1+a(x-\tilde r)\}_+}{q},
\qquad
B(x)=\frac{1+(x-\tilde r)_+/\varepsilon}{q}.
\]
For \(a>0\), setting \(z=\tilde r-a^{-1}\) gives
\(e_a(x)=e_p^{\mathrm{ES}}(x,\tilde r,z)\), while \(a=0\)
corresponds to the limit as \(z\to-\infty\).
We have \(0\le e_a(x)\le B(x)\) and \(\E B(Z)<\infty\), since
\(\ES_p(F)<\infty\). The Rockafellar--Uryasev representation gives,
for every \(a>0\),
\[
\E e_a(Z)
\ge 1+a\{\ES_p(F)-\tilde r\}>1.
\]
Also, \(\E e_0(Z)=1/q>1\). By dominated convergence,
\(a\mapsto\E e_a(Z)\) is continuous on the compact set \(A\), so
\(d:=\min_{a\in A}\{\E e_a(Z)-1\}>0\).

For \(0<u\le q\), write
\(h_u(a,x)=\log\{1-u+u e_a(x)\}\).
For each fixed \(x\), continuity and boundedness of \(e_a(x)\) on \(A\)
give \(h_u(a,x)/u\to e_a(x)-1\) uniformly in \(a\) as \(u\downarrow0\).
The bound \(\lvert h_u(a,x)\rvert/u\le\{1+B(x)\}/p\) allows
dominated convergence, yielding
\[
\sup_{a\in A}\left|
\frac{\E h_u(a,Z)}{u}-\{\E e_a(Z)-1\}\right|\longrightarrow0.
\]
Hence there is \(u_0\in(0,q]\) such that, for every \(0<u\le u_0\),
\(\min_{a\in A}\E h_u(a,Z)\ge ud/2>0\).

Fix such a \(u\) and take \(\lambda_t=u\) throughout monitoring.
The map \(a\mapsto h_u(a,x)\) is continuous on \(A\), with the
integrable bound \(\lvert h_u(a,x)\rvert\le|\log p|+uB(x)\).
Since the monitoring innovations are iid, the strong law holds uniformly:
\[
\sup_{a\in A}\left|
\frac1n\sum_{s=n_c+1}^{n_c+n}h_u(a,Z_s(\theta_0))
-\E h_u(a,Z)\right|\longrightarrow0
\quad\text{almost surely}.
\]
To see the uniformity, for each \(a\in A\) the expected value of
\(\sup_{b\in A:\,|b-a|<\delta}|h_u(b,Z)-h_u(a,Z)|\) tends to zero
as \(\delta\downarrow0\), by continuity and dominated convergence.
Cover \(A\) by finitely many such neighborhoods and apply the strong
law to their centers and these suprema. A countable sequence of
tolerances gives the displayed convergence.

The reports are constant, so the feasible VaR range remains
\(z\le\tilde r-\varepsilon\). Adding its limiting point \(a=0\)
does not change the infimum, by continuity. Thus
\[
\begin{aligned}
\frac1n\log M_{n_c+n}(\theta_0;\lambda)
&=\min_{a\in A}\frac1n\sum_{s=n_c+1}^{n_c+n}h_u(a,Z_s(\theta_0))\\
&\longrightarrow\min_{a\in A}\E h_u(a,Z)
\ge\frac{ud}{2}>0
\quad\text{almost surely}.
\end{aligned}
\]
It follows that \(M_t(\theta_0;\lambda)\to\infty\) almost surely.
For every \(c>1\), the probability of crossing \(c\) by time \(T\)
therefore tends to one as \(T\to\infty\).
\end{proof}

\begin{proof}[Proof of Proposition~\ref{prop:constant-growth-optimum}]
\phantomsection\label{proof:prop-constant-growth-optimum}
Let \(Z\sim F\) and \(X=e_p^{\mathrm{ES}}(Z,\tilde r,z_0)\), so that \(g_F(z_0;u)=\E\log(1-u+uX)\).  By \eqref{eq:ru-lower-quantile},
\(\E[X]=\frac{\ES_p(F)-z_0}{\tilde r-z_0}\in(1,\infty)\).
Hence \(\Pp(X>0)=1-F(z_0)>0\), while \(F(z_0)\ge p\) gives \(1-F(z_0)\le1-p<1\).  For \(0\le u<1\),
\(\log(1-u) \le\log(1-u+uX) \le\log(1+X) \le X\).
Since \(\E[X]<\infty\), \(g_F(z_0;\cdot)\) is finite and
continuous on \([0,1)\).  Since \(\Pp(X=0)=F(z_0)>0\), the
\(\log(1-u)\) term on this event makes \(g_F(z_0;\cdot)\) strictly concave.

For \(0<u<1\),
\(-\frac1{1-u} \le\frac{X-1}{1-u+uX} \le\frac1u\).
These deterministic bounds justify differentiation under the expectation
on compact subintervals of \((0,1)\), giving the continuous derivative in \(u\)
\[
g_F'(z_0;u)
=\E\left[\frac{X-1}{1-u+uX}\right]
=\frac1u\left\{1-\E\frac1{1-u+uX}\right\}.
\]
As \(u\downarrow0\), monotone convergence on \(\{X\ge1\}\) and
dominated convergence on \(\{X<1\}\) give
\(\lim_{u\downarrow0}g_F'(z_0;u)=\E[X]-1>0\).
At \(u=1-F(z_0)\), the event \(\{X=0\}\) contributes one to the expectation of the
reciprocal, so
\[
g_F'(z_0;1-F(z_0))
=-\frac{1}{1-F(z_0)}\E\left[
\frac{\1_{\{X>0\}}}{F(z_0)+\{1-F(z_0)\}X}\right]<0.
\]
Strict concavity makes \(g_F'(z_0;\cdot)\) strictly decreasing.  It therefore has a
unique zero \(u^\star\in(0,1-F(z_0))\), which is the unique global maximizer
on \([0,1)\).  Since \(1-F(z_0)\le1-p\), we obtain \(0<u^\star<1-p\).
\end{proof}

\begin{proof}[Proof of Proposition~\ref{prop:dense-grid-convergence}]
\phantomsection\label{proof:prop-dense-grid-convergence}
Fix a data path, write \(\Lambda_T=[0,1-p]^n\), and define
\[
D_k:=\sup_{\lambda\in\Lambda_T}
\max_{n_c\le t\le T}
\bigl\{\log M_t(\mathcal N_k;\lambda)-\log M_t(\lambda)\bigr\}.
\]
First fix \(\lambda\in\Lambda_T\) and a date \(t\).
The report margin makes all denominators
positive on the feasible set. Each factor is continuous there and bounded
below by \(1-\lambda_s\ge p\).
Every finite candidate entering the grid minimum is feasible. The value at
\(w=1\) is the limit along feasible pairs with \(z\to-\infty\), since each e-statistic \(X_s(\theta,z)\) tends to \((1-p)^{-1}\). Hence
\(M_t(\mathcal N_k;\lambda)\ge M_t(\lambda)\). Nested grids make
\(M_t(\mathcal N_k;\lambda)\) nonincreasing in \(k\).
Since every grid includes \(z=-\infty\),
\[
\begin{aligned}
p^{t-n_c}
&\le M_t(\lambda)\le M_t(\mathcal N_k;\lambda)\\
&\le\prod_{s=n_c+1}^t
\left(1-\lambda_s+\frac{\lambda_s}{1-p}\right)
\le(1+p)^{t-n_c}.
\end{aligned}
\]

To identify the limit, choose any feasible finite pair \((\theta,z)\).
For \(h>0\), the pair \((\theta,z-h)\) has strict report margin at every
date through \(t\). Its product converges to that at \((\theta,z)\) as
\(h\downarrow0\). The pair \((\theta,z-h)\) corresponds to some \(w<1\). Continuity of the
reports and of the map from \((\theta,w)\) to \(z\), together with the
strict margin, implies that sufficiently close grid pairs remain
feasible through \(t\). Density of the grids and continuity of the product
therefore give
\(\lim_{k\to\infty}M_t(\mathcal N_k;\lambda) \le M_t(\theta,z-h;\lambda)\).
Let \(h\downarrow0\) and take the infimum over feasible finite pairs.
Combining with the reverse bound proves the pointwise convergence.
The logarithms converge because the limit is positive.

To make this convergence uniform in \(\lambda\), note that feasibility gives
\[
0\le X_s(\theta,z)
\le\frac{1+(Z_s(\theta)-\tilde r_s(\theta))_+/\varepsilon}{1-p}
\le K_s<\infty,
\]
where compactness and continuity make \(K_s\) independent of the feasible
pair. This bound also covers \(z=-\infty\).
Each betting factor is at least \(p\), so the log products have a common
finite Lipschitz constant \(C_T\) on \(\Lambda_T\) in the supremum norm.
Taking infima or grid minima preserves that constant. Hence the errors
\(e_{k,t}(\lambda)=\log M_t(\mathcal N_k;\lambda)-\log M_t(\lambda)\)
are \(2C_T\)-Lipschitz, uniformly in \(k\) and \(t\).
Cover \(\Lambda_T\) by finitely many balls of radius \(h\), with centers in \(\Lambda_T\).
Pointwise convergence at their centers gives
\[
\limsup_{k\to\infty}\sup_{\lambda\in\Lambda_T}
\max_{n_c\le t\le T}e_{k,t}(\lambda)\le2C_T h.
\]
Letting \(h\downarrow0\) proves \(D_k\to0\).
Nestedness also gives \(D_k\downarrow0\).

Let \(\Delta_k\) denote the supremum in
\eqref{eq:dense-grid-uniform-betting}. The mean value theorem and the
capital bounds give
\[
0\le\Delta_k\le(1+p)^nD_k\longrightarrow0.
\]
Nestedness makes \(\Delta_k\) nonincreasing, proving
\eqref{eq:dense-grid-uniform-betting}.

For each deterministic \(\lambda\), the profile values are measurable by
the argument in the proof of
Proposition~\ref{thm:full-parameter-profile-validity} and the finite-grid
construction. Their continuity in \(\lambda\) allows the supremum defining
\(\Delta_k\) to be taken over a countable dense subset of \(\Lambda_T\), so
\(\Delta_k\) is measurable. Since \(0\le\Delta_k\le(1+p)^n\), bounded
convergence gives \(\E_P[\Delta_k^q]\to0\) for every \(1\le q<\infty\)
under any law \(P\) satisfying the path conditions almost surely.
No moment condition on losses or reports is needed.

For a common betting process \(\lambda\), the first part of the proof gives
\(Y_k=\max_{n_c<t\le T}M_t(\mathcal N_k;\lambda)
\downarrow Y=\max_{n_c<t\le T}M_t(\lambda)\).
Thus \(\{Y_k\ge c\}\downarrow\{Y\ge c\}\), proving convergence of the
threshold-crossing probabilities. Under the null of
Proposition~\ref{thm:full-parameter-profile-validity}, the limiting
probability is at most \(1/c\) for every \(c>1\) and every predictable
\(\lambda\) taking values in \([0,1-p]\).
\end{proof}

\section{Backtesting gamed ES forecasts}
\label{app:gamed}

A period of conservative reporting can delay the detection of subsequent
under-reporting: the earlier reports affect both the accumulated wealth and
the later values of the betting process \((\lambda_t)\).  \citet[Section~E]{wang-wang-ziegel-2026}
study this issue using stationary losses and changing forecasts.  Their rules for choosing the betting processes \((\lambda_t)\) use a rolling
history so that earlier over-reports gradually leave the estimation window.  Here we examine the same reporting pattern
for the e-process \(M_t(\lambda)\).

The design uses \(N=1000\) paths for each of the normal and standardized
\(t_3\) innovation distributions.  We use the AR(1)-GARCH(1,1)
specification in Section~\ref{sec:sim}, with \(\phi=0.10\),
\(\omega=0.02\), \(\alpha=0.08\), and \(\beta=0.90\).
These parameters remain fixed throughout the experiment.  Each path has
a calibration period of \(n_c=200\) observations,
and a monitoring period of \(800\) days.  The reported ES is
\[
r_t=\mu_t(\theta_0)+\sigma_t(\theta_0)m_t\ES_p(F),\qquad
m_t=\begin{cases}
1.1,&1\le t\le n_c+300,\\
0.9,&n_c+300<t\le n_c+800.
\end{cases}
\]
Thus the first under-report occurs on day 301 of the monitoring period.  As in the main simulation, the multiplier changes the term \(\sigma_t(\theta_0)\ES_p(F)\) and leaves the conditional mean \(\mu_t(\theta_0)\) unchanged.  All methods use the same
loss paths and ES forecasts.

The ES-only design uses \(M_t(\lambda)\) in \eqref{eq:full-parameter-profile}, the betting process \((\lambda_t)\) specified in
Section~\ref{sec:betting-selection}, and the same capital grid as in
Section~\ref{sec:financial-data}.
GREE, GREL, and GREM additionally observe the true conditional VaR.
Following \citet{wang-wang-ziegel-2026}, the GREE and GREL betting processes \((\lambda_t)\) use, on day \(j\) of the monitoring period, the preceding
\(\min\{500,j-1\}\) observations from that period, with each
value chosen using the Taylor approximation and capped at \(1/2\); their first values are
zero.  GREE uses the historical
VaR--ES e-statistics, GREL re-evaluates the historical losses using the
current forecast pair, and GREM averages their wealth processes with equal
initial weights.  All wealth processes start at one and continue across the reporting
change.

Table~\ref{tab:gamed-forecasts-h800} separates detections before and after
the reporting change.  Every percentage uses all 1000 paths per innovation
distribution.  A detection during the under-reporting period means that the threshold is reached
at least once on days 301--800 of the monitoring period, regardless of any detection during the over-reporting period.  Thus a path can be counted in both periods. The ES-only entries use
\(\widehat M_t(\lambda)\) computed as in the main simulation.

\begin{table}[!htbp]
\centering
\caption{Detection percentages before and after the reporting change (\(N=1000\)).}
\label{tab:gamed-forecasts-h800}
\papertablestyle
\begin{tabular*}{\linewidth}{@{\extracolsep{\fill}}lcccc@{}}
\toprule
Innovation & \tablemethod{\widehat M_t}{(\lambda)} & GREE & GREL & GREM \\
\midrule
\multicolumn{5}{l}{\textit{Panel A: Over-reporting period (days 1--300)}} \\
\addlinespace[2pt]
Normal & 1/0/0 & 2/0/0 & 2/0/0 & 2/0/0 \\
\(t_3\) & 5/1/0 & 4/1/0 & 5/0/0 & 5/0/0 \\
\midrule
\multicolumn{5}{l}{\textit{Panel B: Under-reporting period (days 301--800)}} \\
\addlinespace[2pt]
Normal & 60/36/23 & 58/29/14 & 60/28/12 & 61/30/14 \\
\(t_3\) & 24/8/4 & 13/2/1 & 25/6/2 & 19/4/1 \\
\bottomrule
\end{tabular*}

\par\vspace{2pt}
\begin{minipage}{\linewidth}
\raggedright
\emph{Notes:} Entries correspond to thresholds \(2/5/10\).
\end{minipage}
\end{table}

During days 301--800, \(\widehat M_t(\lambda)\) reaches thresholds \(2/5/10\)
on 60/36/23\% of paths with Normal innovations
and 24/8/4\% with \(t_3\) innovations.

\begin{figure}[!htbp]
\centering
\includegraphics[width=0.98\textwidth]{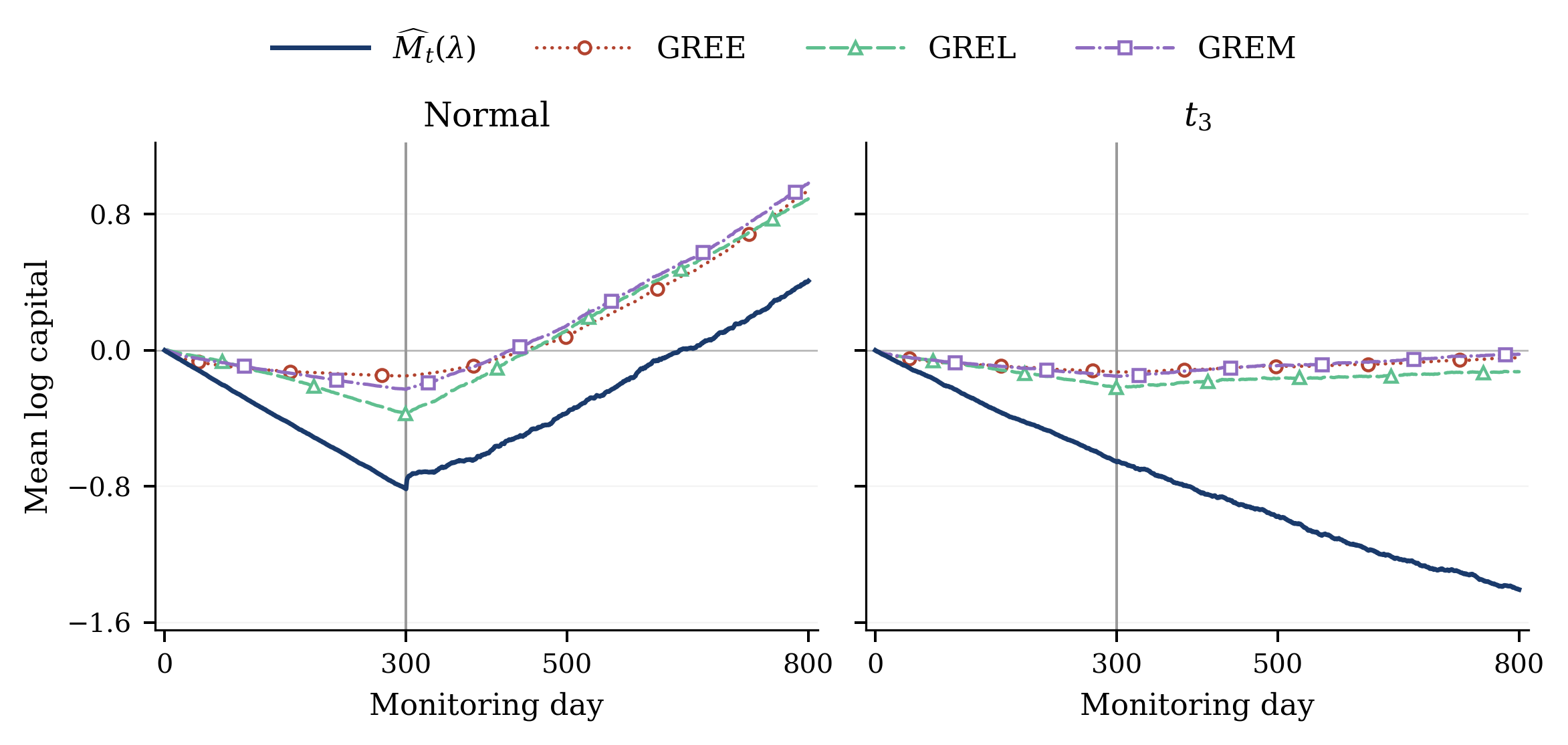}
\caption{Mean log capital before and after the reporting change (\(N=1000\)). The vertical line marks the change after day 300.}
\label{fig:gamed-log-capital}
\end{figure}

\clearpage
\section{NASDAQ log capital paths}
\label{app:nasdaq-log-capital}

Figure~\ref{fig:nasdaq-evidence-paths} illustrates the log capital paths of four methods applied to the NASDAQ data.
\begin{figure}[!htbp]
\centering
\includegraphics[width=0.98\textwidth]{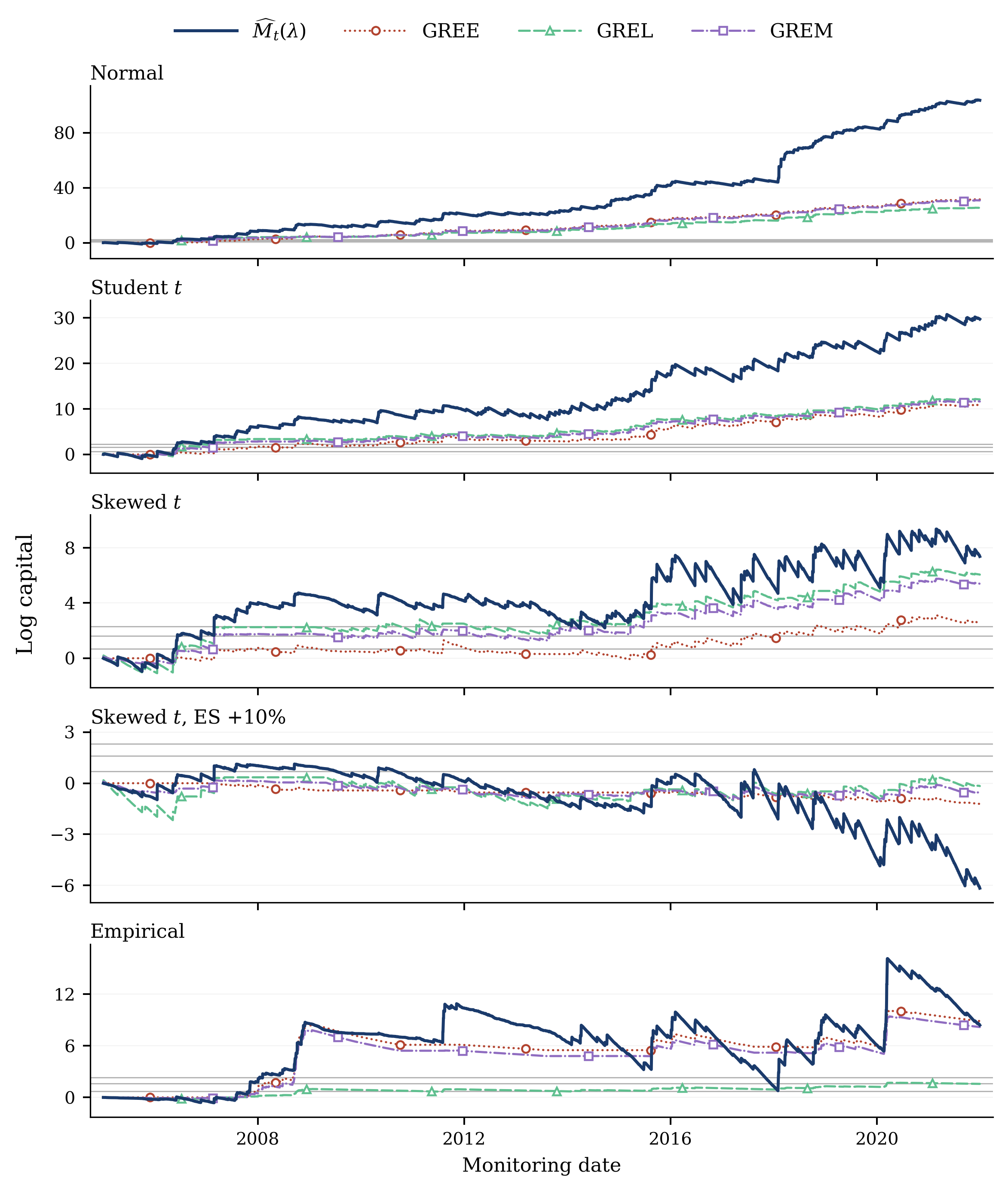}
\caption{Log capital paths for the five NASDAQ forecast specifications.}
\label{fig:nasdaq-evidence-paths}
\end{figure}

\end{document}